\theoremstyle{definition}
\newtheorem{theorem}{Theorem}[section]
\newtheorem{definition}[theorem]{Definition}
\DeclareMathOperator{\tr}{tr}
\DeclareMathOperator{\rk}{rk}
\DeclareMathOperator{\Gr}{Gr}
\newcommand*{\fd}
[2]{\mathchoice{\frac{\delta#1}{\delta#2}}
  {\delta #1/\delta#2}{\delta#1/\delta#2}{\delta#1/\delta#2}}
\newcommand{\ddx}[1]{D^{#1}}
\title{Towards the classification of homogeneous third-order \\
  Hamiltonian operators}
\author{
E.V. Ferapontov$^1$, M.V. Pavlov $^2$, R.F. Vitolo$^3$
\bigskip
\\
\footnotesize $^1$Department of Mathematical Sciences, Loughborough University,
\\
\footnotesize Loughborough, Leicestershire, LE11 3TU, UK
\\
\footnotesize \texttt{e.v.ferapontov@lboro.ac.uk}
\\
\\
\footnotesize $^2$Sector of Mathematical Physics,
\\
\footnotesize Lebedev Physical Institute of Russian Academy of Sciences,
\\
\footnotesize Leninskij Prospekt 53, 119991 Moscow, Russia
\\
\footnotesize and
\\
\footnotesize $^{2}$Department of Applied Mathematics,
\\
\footnotesize National Research Nuclear University MEPHI,
\\
\footnotesize Kashirskoe Shosse 31, 115409 Moscow, Russia
\\
\footnotesize  \texttt{mpavlov@itp.ac.ru}
\\
\\
\footnotesize $^3$Dipartimento di Matematica e Fisica ``E. De Giorgi'',
Universit\`a del Salento
\\
\footnotesize Via per Arnesano, 73100 Lecce, Italy
\\
\footnotesize \texttt{raffaele.vitolo@unisalento.it}
}
\date{\itshape \small Dedicated to Professor Sasha Veselov \\
  \small on the occasion of his 60th birthday}
\begin{document}

\maketitle

\begin{abstract}
  Let $V$ be a vector space of dimension $n+1$.  We demonstrate that
  $n$-component third-order Hamiltonian operators of differential-geometric
  type are parametrised by the algebraic variety of elements of rank $n$ in
  $S^2(\Lambda^2V)$ that lie in the kernel of the natural map
  $S^2(\Lambda^2V)\to \Lambda^4V$. Non-equivalent operators correspond to
  different orbits of the natural action of $SL(n+1)$.  Based on this result,
  we obtain a classification of such operators for $n\leq 4$.
  \\[1mm]
  \noindent MSC 2010: 37K05, 37K10, 37K20, 37K25, 14Nxx.
  \\[1mm]
  \noindent Keywords: Hamiltonian Operator, Jacobi Identity, Projective Group,
  Quadratic Complex, Monge Metric.
\end{abstract}

\newpage

\tableofcontents

\section{Introduction}

In this paper we discuss homogeneous
third-order Hamiltonian operators of differential-geometric type,
\begin{equation}
  J=g^{ij}\ddx{3}+b_{k}^{ij}u_{x}^{k}\ddx{2}
  +(c_{k}^{ij}u_{xx}^{k}+c_{km}^{ij}u_{x}^{k}u_{x}^{m})\ddx{}
  +d_{k}^{ij}u_{xxx}^{k}+d_{km}^{ij}u_{xx}^{k}u_{x}^{m}
  +d_{kmn}^{ij}u_{x}^{k}u_{x}^{m}u_{x}^{n}.
  \label{third}
\end{equation}
Here ${\bf u}=(u^1, \dots, u^n)$ are the dependent variables, and all
coefficients  are functions of ${\bf u}$; $D=\frac{d}{dx}$. 
The operator $J$ is Hamiltonian if and only if the
corresponding Poisson bracket,
\begin{displaymath}
  \{F, G\} = \int \fd{F}{u}J^{}\fd{G}{u} dx,
\end{displaymath}
is skew-symmetric and satisfies the Jacobi identity. Operators of type
(\ref{third}) were introduced by Dubrovin and Novikov in \cite{DN, DN2}, and
thoroughly investigated by Potemin \cite {GP91, GP97}, Doyle \cite{Doyle},
Balandin and Potemin \cite{BP}.  We will only consider the non-degenerate case,
$\det g^{ij}\neq 0$. Under point transformations, ${\bf u}={\bf u}(\tilde {\bf
  u})$, the coefficients of (\ref{third}) transform as differential-geometric
objects.  Thus, $g^{ij}$ transforms as a $(2,0)$-tensor, so that its inverse
$g_{ij}$ defines a pseudo-Riemannian metric, the expressions $ -
\frac{1}{3}g_{js}b_{k}^{si}$, $ - \frac{1}{3}g_{js}c_{k}^{si}$, $ -
g_{js}d_{k}^{si}$ transform as Christoffel symbols of affine connections, etc.
In particular, the last connection, $\Gamma _{jk}^{i}= - g_{js}d_{k}^{si}$,
must be symmetric and flat \cite{Novikov, Doyle, GP91}.  Therefore, there
exists a distinguished coordinate system (flat coordinates) such that
$\Gamma_{jk}^{i}$ vanish. Flat coordinates are determined up to affine
transformations. We will keep for them the same notation $u^i$, note that $u^i$
are nothing but the densities of Casimirs. In the flat coordinates the last
three terms in (\ref{third}) vanish, leading to a simplified expression
\cite{Doyle},
\begin{equation}
  J=\ddx{}\left(g^{ij}\ddx{}+c_{k}^{ij}u_{x}^{k}\right)\ddx{}.  \label{casimir}
\end{equation}
This operator is Hamiltonian if and only if the metric $g_{ij}$ with lower
indices and the objects $c_{ijk}=g_{iq}g_{jp}c_{k}^{pq}$ satisfy the relations
\cite{GP97}:
\begin{subequations}\label{eq:3}
  \begin{gather}
    g_{mn,k}=-c_{mnk}-c_{nmk},\label{eq:4}\\
    c_{mnk}=-c_{mkn},\label{eq:5}
    \\
    c_{mnk}+c_{nkm}+c_{kmn}=0,\label{eq:6}\\
    c_{mnk,l}=-g^{pq}c_{pml}c_{qnk}.  \label{eq:7}
  \end{gather}
\end{subequations}
It was observed in \cite{FPV} that equations \eqref{eq:3} can be rewritten in
terms of the metric $g$ alone: first of all, system \eqref{eq:3} implies $
c_{nkm}=\frac{1}{3}(g_{nm,k}-g_{nk,m})=\frac{1}{3}g_{n[m,k]}, $ and the
elimination of $c$ results in \begin{equation} g_{mk,n}+g_{kn,m}+g_{mn,k}=0,
  \label{Killing}
\end{equation}
\begin{equation}
  \begin{array}{c}
    g_{m[k,n]l}=-\frac{1}{3}g^{pq} g_{p[l,m]}g_{q[k,n]}.
  \end{array}
  \label{nonlin}
\end{equation}
The second observation of \cite{FPV} is that the generic metric
$g=g_{ij}du^idu^j$ satisfying linear subsystem (\ref{Killing}) is a quadratic
form in $du^i$ and $u^jdu^k-u^kdu^j$, explicitly,
\begin{equation}
  \begin{array}{c}
    g_{ij}du^idu^j=a_{ij}du^idu^j + b_{ijk}du^i(u^jdu^k-u^kdu^j) +
    c_{ijkl}(u^idu^j-u^jdu^i)(u^kdu^l-u^ldu^k),
  \end{array}
  \label{Monge}
\end{equation}
where $a_{ij}, \ b_{ijk}, \ c_{ijkl}$
are arbitrary constants.  Since flat coordinates are defined up to affine
transformations, system (\ref{Killing}), (\ref{nonlin}) is invariant under
transformations of the form
$$
\tilde u^i= l^i({\bf u}), ~~~ \tilde g= g,
$$
where $l^i$ are arbitrary linear forms in the flat coordinates, and $\tilde
g=g$ indicates that $g$ transforms as a metric. What is less obvious is that
system (\ref{Killing})-(\ref{nonlin}) is invariant under the bigger group of
projective transformations,
$$
\tilde u^i= \frac{l^i({\bf u})}{l({\bf u})}, ~~~ \tilde g= \frac{g}{l^4({\bf
    u})},
$$
where $l$ is yet another linear form in the flat coordinates.  It was
demonstrated in \cite{FPV} that projective transformations correspond to
reciprocal transformations of Hamiltonian operator (\ref{casimir}).  Note that
the class of metrics (\ref{Monge}), known in projective geometry as the Monge
metrics of quadratic line complexes, is also invariant under projective
transformations.  Recall that a quadratic line complex is a $(2n-3)$-parameter
family of lines in projective space $\mathbb{P}^n$ specified by a single
quadratic equation in the Pl\"ucker coordinates. Fixing a point $p\in
\mathbb{P}^n$ and taking all lines of the complex that pass through $p$ we
obtain a quadratic cone with vertex at $p$. This field of cones supplies
$\mathbb{P}^n$ with a conformal structure whose general form is given by
(\ref{Monge}). The key invariant of a quadratic line complex is its singular
variety defined by the equation
$$
\det g_{ij}=0.
$$
This is the locus where null cones of $g$ degenerate into a pair of
hyperplanes; it is known to be a hypersurface in $\mathbb{P}^n$ of degree
$2n-2$, see \cite{Dolgachev}, Prop. 10.3.3. For $n=2$ the singular variety is a
conic in $\mathbb{P}^2$, for $n=3$ it is the Kummer quartic in $\mathbb{P}^3$,
for $n=4$ the Segre sextic in $\mathbb{P}^4$, etc.  It turns out that singular
varieties of Monge metrics corresponding to homogeneous third-order Hamiltonian
operators degenerate into double hypersurfaces of degree $n-1$ (equivalently,
$\det g_{ij}$ is a complete square, see Theorem \ref{thm1} of Section
\ref{sec:ag}).  The classification of 2- and 3-component operators can be
summarised as follows.

\medskip

\noindent {\bf Two-component case \cite{FPV}}: {\it Modulo projective
  transformations, every 2-component homogeneous third-order Hamiltonian
  operator can be reduced to constant form.  }

\medskip

\noindent {\bf Three-component case \cite{FPV}}: {\it Modulo (complex)
  projective transformations, the metric of every 3-component homogeneous
  third-order Hamiltonian operator can be reduced to one of the 6 canonical
  forms}:
\begin{gather*}
  g^{(1)}=\begin{pmatrix} (u^{2})^{2}+c & -u^{1}u^{2}-u^{3} & 2u^{2} \\
    -u^{1}u^{2}-u^{3} & (u^{1})^{2}+c(u^{3})^{2} & -cu^{2}u^{3}-u^{1} \\ 2u^{2}
    & -cu^{2}u^{3}-u^{1} & c(u^{2})^{2}+1
  \end{pmatrix}, \\ g^{(2)} = \begin{pmatrix}
    (u^{2})^{2}+1 & -u^{1}u^{2}-u^{3} & 2u^{2} \\
    -u^{1}u^{2}-u^{3} & (u^{1})^{2} & -u^{1} \\
    2u^{2} & -u^{1} & 1
  \end{pmatrix}, \quad g^{(3)} = \begin{pmatrix}
    (u^{2})^{2}+1 &  -u^{1}u^{2}&0 \\
    -u^1u^2 & (u^1)^2 & 0 \\
    0 & 0 & 1%
  \end{pmatrix}, \\ g^{(4)}= \begin{pmatrix} -2u^2 & u^1 & 0
    \\
    u^1 & 0 & 0
    \\
    0 & 0 & 1
  \end{pmatrix}, \quad g^{(5)}=\begin{pmatrix} -2u^2 & u^1 & 1
    \\
    u^1 & 1 &0
    \\
    1 & 0 & 0
  \end{pmatrix}, \quad g^{(6)} =
  \begin{pmatrix}
    1 & 0 & 0\\ 0 & 1 & 0\\ 0 & 0 & 1
  \end{pmatrix}.
\end{gather*}
The corresponding singular varieties, $\det g=0$, are as follows 
\begin{itemize}
\item $g^{(1)}, g^{(2)}$: double quadric;
\item $g^{(3)}, g^{(4)}$: two double planes, one of them at infinity;
\item $g^{(5)}, g^{(6)}$: quadruple plane at infinity.
\end{itemize}
 Direct calculations demonstrate that the metrics $g^{(4)}, g^{(5)},
g^{(6)}$ are flat, while $g^{(1)}, g^{(2)}, g^{(3)}$ are not even
conformally flat: they have non-vanishing Cotton tensor.

The structure of the paper is as follows. In Section~\ref{sec:examples} we
discuss some new examples of Hamiltonian PDEs associated with third-order
Hamiltonian operators. Our examples suggest that every operator (\ref{third})
arises as a Hamiltonian structure of some linearly degenerate
non-diagonalisable system of hydrodynamic type.
In Section~\ref{sec:cf} we introduce the normal form of a quadratic complex in
$\mathbb{P}^n$ that generalises the Clebsch normal form in
$\mathbb{P}^3$. Theorem \ref{thm2} of Section \ref{sec:ag} gives a
parametrisation of $n$-component third-order Hamiltonian operators by the
algebraic variety of elements $Q\in S^2(\Lambda^2V)$ of rank $n$ that belong to
the kernel of the natural map $S^2(\Lambda^2V)\to \Lambda^4V$.  The
classification results are summarised in Section~\ref{sec:class-results}. In
particular, for $n=4$ we obtain 32 non-equivalent multi-parameter canonical
forms.

All computations were performed with the REDUCE computer algebra system
\cite{reduce} and its package CDE \cite{cdiff}.

\section{Examples}\label{sec:examples}

Third-order Hamiltonian operators arise in applications in the context of
Monge-Amp\`ere/WDVV equations of 2D topological field theory \cite{FN, KN1,
  KN2, FPV, PV}. In this Section we demonstrate that 3-component operators
(\ref{casimir}) associated with the metrics $g^{(1)} - g^{(5)}$ can be realised
as Hamiltonian structures of certain linearly degenerate non-diagonalisable
systems of hydrodynamic type. We found the systems using compatibility
conditions between operators and vectors of fluxes of hydrodynamic-type systems
in conservative form~\cite{FPV3}.  We emphasise that even though
the corresponding Hamiltonian densities are nonlocal, the existence of
\emph{local} first-order systems with \emph{local} third-order Hamiltonian
structures is a non-trivial fact.

\medskip
\noindent{\bf Example 1: metric $g^{(1)}$.} The  system
  \begin{equation*}
\begin{array}{c}
 \displaystyle    u^1_t=(\alpha u^2+\beta u^3)_x,\\
    \ \\
\displaystyle     u^2_t=\left( \frac{((u^2)^2-c)(\alpha u^2+\beta u^3)+\gamma (1-c(u^2)^2) +\delta (u^1-cu^2u^3)}{u^1u^2-u^3}\right)_x,\\
    \ \\
\displaystyle      u^3_t=\left( \frac{\alpha u^3((u^2)^2-c)+\beta u^3(u^2u^3-cu^1)+\gamma (u^1-cu^2u^3)+\delta ((u^1)^2-c(u^3)^2)}{u^1u^2-u^3}\right)_x,
     \end{array}
    \end{equation*}
    where $\alpha, \beta, \gamma, \delta$ are arbitrary constants, possesses
    third-order Hamiltonian structure (\ref{casimir}) generated by the metric
    $g^{(1)}$ and the nonlocal Hamiltonian,
 $$
\begin{array}{c} H= \int \Big( 
 \frac{1}{2}\alpha
      (2cxu^1 {\ddx{}}^{-1}  u^{2}  
        +u^3({\ddx{}}^{-1}u^{2})^{2}  
        + c   x^{2}u^3
      )
      +  \beta  u^3(1-c^2) {\ddx{}}^{-1}u^{2}  {\ddx{}}^{-1}u^{3}
        \\
        \ \\
      +  \delta(
      xu^1{\ddx{}}^{-1}u^{1}  
        +cu^3{\ddx{}}^{-1}u^{1}  {\ddx{}}^{-1}u^{2}
        +cu^1{\ddx{}}^{-1} u^{2}  {\ddx{}}^{-1}u^{3}  
        +cxu^{3}  {\ddx{}}^{-1}u^3 )
        \\
        \ \\
      +\frac{1}{2}\gamma
      (cu^1({\ddx{}}^{-1}u^{2})^{2}  
        +x^2u^1
        +2cxu^3{\ddx{}}^{-1}  u^{2} 
\Big) dx.
      \end{array}
 $$
One can show that this system is linearly degenerate, and non-diagonalisable
for generic values of parameters (the diagonalisability conditions are
equivalent to $\alpha \delta - \beta \gamma=0$).

\medskip
\noindent{\bf Example 2: metric $g^{(2)}$.} The  system
  \begin{equation*}
\begin{array}{c}
 \displaystyle    u^1_t=(\alpha u^2+\beta u^3)_x,\\
    \ \\
\displaystyle     u^2_t=\left( \frac{((u^2)^2-1)(\alpha u^2+\beta u^3)-(\gamma +\delta u^1)}{u^1u^2-u^3}\right)_x,\\
    \ \\
\displaystyle      u^3_t=\left( \frac{(u^2u^3-u^1)(\alpha u^2+\beta u^3)-u^1(\gamma +\delta u^1)}{u^1u^2-u^3}\right)_x,
     \end{array}
    \end{equation*}
where $\alpha, \beta, \gamma, \delta$ are arbitrary constants, possesses third-order Hamiltonian structure (\ref{casimir}) generated by the metric $g^{(2)}$
 and the nonlocal Hamiltonian,
 $$
 H= \int \left(\frac{1}{2}\alpha u^3 ({\ddx{}}^{-1}u^2)^{2} 
      +  \beta u^3 {\ddx{}}^{-1}u^2  {\ddx{}}^{-1}u^3 
      -\frac{1}{2} \gamma  x^{2}u^1
      -  \delta x u^1 {\ddx{}}^{-1}u^1\right) dx.
 $$
One can show that this system is linearly degenerate, and non-diagonalisable
for generic values of parameters (the diagonalisability conditions are
equivalent to $\alpha \delta - \beta \gamma=0$).

\medskip
\noindent{\bf Example 3: metric $g^{(3)}$.} The system
$$
u^1_t=(u^2+u^3)_x, \quad u^2_t=\left(\frac{u^2(u^2+u^3)-1}{u^1}\right)_x, \quad
u^3_t=u^1_x,
$$
possesses third-order Hamiltonian structure (\ref{casimir}) generated by the
metric $g^{(3)}$ and the nonlocal Hamiltonian,
    \begin{displaymath}
      H=\int
      \left(-{\ddx{}}^{-1}u^1{\ddx{}}^{-1}u^3+xu^1{\ddx{}}^{-1}u^2\right)
      dx.
    \end{displaymath}
Explicitly, $u_t=J{\delta H}/{\delta u}$ 
where
  $$
  J= \ddx{}\left(
    \begin{array}{ccc}
      \ddx{} & \ddx{} \frac{u^2}{u^1} & 0 \\
        \frac{u^2}{u^1} \ddx{}&\frac{(u^2)^2+1}{2(u^1)^2}\ddx{} +\ddx{} \frac{(u^2)^2+1}{2(u^1)^2}& 0 \\
     0 & 0 & \ddx{}
    \end{array}\right)
  \ddx{}.
    $$
Setting $u^1=f_{xxt}$, $u^2=f_{xtt}-f_{xxx}$, $u^3=f_{xxx}$ we obtain
$f_{xxt}^2 -f_{xxx}f_{xtt}+ f_{xtt}^2-f_{xxt}f_{ttt}-1=0$, which is a
particular case of WDVV equation \cite{Dub}; in the present form, it first
appeared in \cite{Aga}. This  third-order Hamiltonian structure is apparently
new.

  \medskip

\noindent {\bf Remark.} Transformations between third-order PDEs and
systems of hydrodynamic type appearing above, and in examples below, were
first proposed by Mokhov in~\cite{OM95,OM98}.

\medskip
\noindent{\bf Example 4: metric $g^{(4)}$.} The  system 
  \begin{equation*}
    u^1_t=u^2_x,\quad u^2_t=\left( \frac{(u^2)^2+u^3}{u^1}\right)_x,\quad u^3_t=u^1_x,
  \end{equation*}
  possesses third-order Hamiltonian structure (\ref{casimir}) generated by the
  metric $g^{(4)}$ and the nonlocal Hamiltonian,
    \begin{displaymath}
      H=\int\left(u^2{\ddx{}}^{-1}u^1{\ddx{}}^{-1}u^2 - {\ddx{}}^{-1}u^1
	{\ddx{}}^{-1}u^3\right)dx.
    \end{displaymath}
    Explicitly, $u_t=J{\delta H}/{\delta u}$ where
  $$
  J= \ddx{}\left(
    \begin{array}{ccc}
      0 & \ddx{} \frac{1}{u^1} & 0 \\
      \frac{1}{u^1} \ddx{}&\frac{u^2}{(u^1)^2}\ddx{} +\ddx{} \frac{u^2}{(u^1)^2}& 0 \\
      0 & 0 & \ddx{}
    \end{array}\right)
  \ddx{}.
    $$
Setting $u^1=f_{xxt}$, $u^2=f_{xtt}$, $u^3=f_{xxx}$ we obtain $f_{xxx}
=f_{ttt}f_{xxt}- f_{xtt}^2$, which is equivalent to the WDVV equation
\cite{Dub} under the interchange of $x$ and $t$.  This third-order  Hamiltonian
representation was constructed in \cite{KN1,KN2}.

\medskip
\noindent{\bf Example 5: metric $g^{(5)}$.} The  system 
  \begin{equation*}
    u^1_t=u^2_x,\quad u^2_t=u^3_x,\quad u^3_t=((u^2)^2-u^1u^3)_x,
  \end{equation*}
  possesses third-order Hamiltonian structure (\ref{casimir}) generated by the
  metric $g^{(5)}$ and the nonlocal Hamiltonian,
    \begin{displaymath}
      H=-\int\left(  \frac{1}{2}u^1\left({\ddx{}}^{-1}u^2\right)^2 + {\ddx{}}^{-1}u^2
	{\ddx{}}^{-1}u^3\right)dx.
    \end{displaymath}
Explicitly, $u_t=J{\delta H}/{\delta u}$ 
where
  $$
  J= \ddx{}\left(
    \begin{array}{ccc}
      0 & 0 & \displaystyle\ddx{} \\
      0 & \displaystyle\ddx{} & -\displaystyle\ddx{}u^1 \\
      \displaystyle\ddx{} & -u^1\displaystyle\ddx{} &
      \displaystyle\ddx{}u^2 + u^2\ddx{} + u^1 \ddx{} u^1
    \end{array}\right)
  \ddx{}.
    $$
    Setting $u^1=f_{xxx}$, $u^2=f_{xxt}$, $u^3=f_{xtt}$ we obtain $f_{ttt} =
    f_{xxt}^2 - f_{xxx}f_{xtt}$, which is the simplest case of WDVV equations
    \cite{Dub}. This third-order Hamiltonian representation was found in
    \cite{FN}. Note that although examples 4 and 5 are equivalent under the
    interchange of $x$ and $t$, the action of this elementary transformation on
    Hamiltonian structures is nontrivial, in particular, Hamiltonian operators
    from Examples 4 and 5 are not projectively-equivalent.

    \medskip
    
\noindent{\bf Example 6.}
A natural  4-component generalisation of  Example 5 is the following system,
  \begin{equation*}
    u^1_t=u^2_x,\quad u^2_t=u^3_x,\quad u^3_t=u^4_x,
    \quad u^4_t=((u^2)^2-u^1u^3)_x,
  \end{equation*}
which possesses the Hamiltonian formulation $u_t=J{\delta H}/{\delta u}$
  with the  third-order Hamiltonian operator
\begin{equation*} 
J =\ddx{}
\begin{pmatrix}
0 & 0 & 0 & \ddx{} \\ 
0 & 0 & \ddx{} & 0 \\ 
0 & \ddx{} & 0 & -\ddx{}u^1 \\ 
\ddx{} & 0 & -u^1\ddx{} & \ddx{}u^2+u^2\ddx{}%
\end{pmatrix}
\ddx{},
\end{equation*}
and the nonlocal Hamiltonian,
\begin{equation*}  
  H =-\int \left( \frac{1}{2}u^1(D^{-1}u^2)^2 + D^{-1}u^2D^{-1}u^4
    + \frac{1}{2}(D^{-1}u^3)^2\right).
\end{equation*}
Setting $u^1=f_{xxxx}$, $u^2=f_{xxxt}$, $u^3=f_{xxtt}$, $u^4=f_{xttt}$ we obtain a
fourth-order Monge-Amp\`ere equation, 
 $f_{tttt} = f_{xxxt}^2 - f_{xxxx}f_{xxtt}$.
 This example possesses a straightforward $n$-component generalisation,
 \begin{equation*}
u^1_{t}=u^2_{x},\text{ \ }u^2_{t}=u^3_{x}, ~ \dots, ~
u^{n-1}_t=u^n_{x},\text{ \ }u^n_{t}=((u^2)^2-u^{1}u^{3})_{x},
\end{equation*}
with the Hamiltonian structure $u_t=J{\delta H}/{\delta u}$ where
\begin{equation*}
J=\ddx{}\left( 
\begin{array}{ccccc}
 &  &  &  & \ddx{} \\ 
 &  &  &  \ddx{} &  \\ 
 &  & \dots &  & 0 \\ 
&  \ddx{} &  & 0 & -\ddx{}u^1 \\ 
 \ddx{} &  & 0 &  -u^1\ddx{} & \ddx{}u^2+u^2\ddx{}
\end{array}%
\right)\ddx{},
\end{equation*}%
\begin{equation*}
  H=-\int \left(\frac{1}{2}u^{1}(D^{-1}u^{2})^{2}+\frac{1}{2}
    \underset{m=2}{\overset{m}{\sum }}(D^{-1}u^{m})(D^{-1}u^{n+2-m})\right) dx.
\end{equation*}

\medskip

Examples of this section make it tempting to conjecture that every third-order
Hamiltonian operator (\ref{third}) can be realised as a Hamiltonian structure
of some linearly degenerate non-diagonalisable system of hydrodynamic type.

\section{Canonical form of a quadratic line complex}
\label{sec:cf}

In this section we introduce canonical form of a quadratic complex in
$\mathbb{P}^n$ that can be viewed as a generalisation of the Clebsch normal
form in $\mathbb{P}^3$. This form proves to be convenient for the
characterisation of complexes that correspond to third-order Hamiltonian
operators.

Let us recall the basics of the theory of quadratic complexes. Consider
$n$-dimensional projective space $\mathbb{P}^n$ associated with
$(n+1)$-dimensional vector space $V$. Given two points in $\mathbb{P}^n$ with
homogeneous coordinates $u^i$ and $ v^i$, $i=1, \dots, n+1$, the Pl\"ucker
coordinates $p^{ij}$ of the line through them are defined as $p^{ij}=u^iv^j -
u^jv^i$ (equivalently, one can speak of Pl\"ucker coordinates of the
corresponding 2-dimensional subspace in $V$).  These coordinates satisfy a
system of quadratic relations of the form
$p^{ij}p^{kl}+p^{ki}p^{jl}+p^{jk}p^{il}=0$ that define the Pl\"ucker embedding
of the Grassmannian $\Gr_2(V)$ into $\Lambda^2(V)$.
For $n=3$ one has a single quadratic relation, $ p^{12}p^{34} + p^{31}p^{24} +
p^{23}p^{14}=0$, which defines the Pl\"ucker quadric in $\Lambda^2(V^4)$.

A quadratic line complex is defined by an additional quadratic relation in the
Pl\"ucker coordinates.  This specifies a $(2n-3)$-parameter family of lines in
$\mathbb{P}^n$. Fixing a point $p\in {\mathbb{P}}^n$ and taking all lines of
the complex that pass through $p$ one obtains a quadratic cone with vertex at
$p$. This family of cones supplies $\mathbb{P}^n$ with a conformal structure
(Monge metric), whose explicit form can be obtained as follows. Set
$v^i=u^i+du^i$ (think of $v$ as infinitesimally close to $u$), then the
Pl\"ucker coordinates take the form $p^{ij}=u^idu^j-u^jdu^i$. In the affine
chart $u^{n+1}=1, \ du^{n+1}=0$, part of the Pl\"ucker coordinates simplify to
$p^{(n+1) i}=du^i$, and the equation of the complex reduces to (\ref{Monge}).

Let ${\bf p}=p^{ij}$ be the vector of Pl\"ucker coordinates. Let ${\bf
  p}\Omega^{\alpha}{\bf p}^t=0$ be the Pl\"ucker relations defining
$\Gr_2(V)$, and let ${\bf p}Q{\bf p}^t=0$ be the equation of a quadratic
complex (here $\Omega^{\alpha}$, $Q$ are symmetric matrices); note that $ Q$ is
defined up to transformations of the form $Q\to
Q+c_{\alpha}\Omega^{\alpha}$. Remarkably, there exists a canonical choice of
representative within this class.  For $n=3$ we have a unique Pl\"ucker
relation defined by a $6\times 6$ non-degenerate matrix $\Omega$, and one can
fix $Q$ by the constraint $\tr Q\Omega^{-1}=0$.  This is known as the Clebsch
normal form of a quadratic complex in $\mathbb{P}^3$ \cite{Jess}, p. 109.
Although in higher dimensions the matrices $\Omega^{\alpha}$ are no longer
invertible, there is nevertheless an analogue of Clebsch normal form:

\begin{definition} A quadratic form $Q\in S^2(\Lambda^2V)$ is said to be in
  {\it normal form} if $Q$ belongs to the kernel of the natural map
  $S^2(\Lambda^2V)\to \Lambda^4V$.
\end{definition}

This condition, which  can always be achieved via a transformation $Q\to
Q+c_{\alpha}\Omega^{\alpha}$,  fixes the constants $c_{\alpha}$ uniquely.

\medskip

\noindent {\bf Remark.} Let $\Gr_2(V^*)\subset \Lambda^2(V^*)$ be the
Grassmannian in the dual space, specified by quadratic relations ${\bf
  p^*}{\Omega^{\alpha}}^*{\bf p^*}^t=0$. One can show that the canonical
representative $Q$ defined above can be equivalently fixed by the apolarity
conditions $\tr Q {\Omega^{\alpha}}^*=0$.  Thus, every quadratic complex can be
brought to a canonical form such that the corresponding quadratic form $Q$ is
apolar to the Grassmannian $\Gr_2(V^*)\subset \Lambda^2(V^*)$ . We refer to
\cite{Dolgachev}, Chapter 1 for a general discussion of apolarity in algebraic
geometry.

\section{Complexes corresponding to Hamiltonian operators}
\label{sec:ag}

In this section we give invariant characterisation of quadratic complexes that
correspond to third-order Hamiltonian operators.  Let us first recall the
result of Balandin and Potemin \cite{BP} according to which the general
solution of system (\ref{Killing}) - (\ref{nonlin}) is given by the formula
\begin{equation}
g_{ij}=\phi _{\beta \gamma }\psi _{i}^{\beta }\psi _{j}^{\gamma }, \label{tri}%
\end{equation}
where $\phi _{\beta \gamma }$ is a non-degenerate constant symmetric matrix, and
\begin{equation}
\psi _{k}^{\gamma }=\psi _{km}^{\gamma }u^{m}+\omega _{k}^{\gamma }; \label{lin}%
\end{equation}
here $\psi _{km}^{\gamma }$ and $\omega _{k}^{\gamma }$ are constants such that
$\psi _{km}^{\gamma }=-\psi _{mk}^{\gamma }$, and the matrix $\mathbf{\psi
}=\mathbf{\psi }^{\gamma}_k$ is non-degenerate. Furthermore, Jacobi identities
imply that these constants have to satisfy a set of quadratic relations,
\begin{equation}
\phi _{\beta \gamma }(\psi _{is}^{\beta }\psi _{jk}^{\gamma }+\psi
_{js}^{\beta }\psi _{ki}^{\gamma }+\psi _{ks}^{\beta }\psi _{ij}^{\gamma
})=0, \label{ab}%
\end{equation}%
\begin{equation}
\phi _{\beta \gamma }(\omega _{i}^{\beta }\psi _{jk}^{\gamma
}+\omega _{j}^{\beta }\psi _{ki}^{\gamma }+\omega _{k}^{\beta }\psi
_{ij}^{\gamma })=0. \label{zac}%
\end{equation}
For $n=2$ relations (\ref{ab} - \ref{zac}) are vacuous. For $n=3$ there is only
1 equation (\ref{zac}), while equations (\ref{ab}) are vacuous. For $n=4$ we
have 1 equation (\ref{ab}) and 4 equations (\ref{zac}), 5 relations
altogether. In general, the total number of relations (\ref{ab} - \ref{zac})
equals $C^4_{n+1}$.  Let us first rewrite equations (\ref{tri} - \ref{zac}) in
invariant form.  Formula (\ref{tri}) implies
$$
g=g_{ij}du^idu^j=\phi _{\beta \gamma }(\psi _{i}^{\beta }du^i)(\psi _{j}^{\gamma }du^j),
$$
where, due to the skew-symmetry conditions $\psi _{km}^{\gamma }=-\psi
_{mk}^{\gamma }$, each of the expressions $\psi _{i}^{\beta }du^i$ is a linear
combination of differentials $u^jdu^k-u^kdu^j$ and $du^j$; here $i, j, k=1,
\dots, n$. Let us introduce an auxiliary coordinate $u^{n+1}$ and consider the
$(n+1)\times (n+1)$ skew-symmetric matrix $P$ with the entries
$u^adu^b-u^bdu^a$, where $a, b=1, \dots, n+1$. Then $\psi _{i}^{\beta }du^i$
can be represented as $\tr (A^{\beta}P)$ for some $(n+1)\times (n+1)$
skew-symmetric matrix $A^{\beta}$ (on restriction to the affine chart
$u^{n+1}=1$), so that
\begin{equation}
g=\phi _{\beta \gamma }\tr (A^{\beta}P) \tr (A^{\gamma}P)\vert_{u^{n+1}=1};
\label{A}
\end{equation}
one can use any other affine projection, the resulting operators will be
projectively equivalent. Formula (\ref{A}) involves $n$-dimensional subspace
$A=span \langle A^{\beta} \rangle \subset \Lambda^2V$, and an element
$\phi=\phi_{\beta \gamma} A^{\beta} A^{\gamma} \in S^2 A$. Remarkably,
conditions (\ref{ab} - \ref{zac}) simplify to
\begin{equation}
\phi_{\beta \gamma} A^{\beta}\wedge A^{\gamma}=0,
\label{King}
\end{equation}
that is, $\phi$ must lie in the kernel of the natural map $S^2 A\to
\Lambda^4V$. The main results of this Section are as follows.

\begin{theorem}
  \label{thm1} The singular variety of a quadratic complex corresponding to
  $n$-component third-order Hamiltonian operator (\ref{casimir}) is a double
  hypersurface of degree $n-1$.
\end{theorem}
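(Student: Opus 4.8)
The plan is to read the statement off the Balandin--Potemin representation (\ref{tri})--(\ref{lin}). If $J$ is an $n$-component third-order Hamiltonian operator of the form (\ref{casimir}), its metric is $g_{ij}=\phi_{\beta\gamma}\psi_i^\beta\psi_j^\gamma$, i.e.\ $g=\Psi\,\Phi\,\Psi^t$, where $\Phi=(\phi_{\beta\gamma})$ is a constant non-degenerate symmetric $n\times n$ matrix and $\Psi=(\psi_i^\beta)$ is the $n\times n$ matrix whose entries $\psi_i^\beta=\psi_{im}^\beta u^m+\omega_i^\beta$ (with $\psi_{im}^\beta=-\psi_{mi}^\beta$) are affine-linear in the flat coordinates $u^1,\dots,u^n$. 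Taking determinants gives $\det g=\det\Phi\cdot(\det\Psi)^2$; hence $\det g$ is automatically a complete square and the singular variety $\{\det g=0\}$ is a \emph{double} hypersurface. Since $\det\Phi\neq0$ and $g$ is non-degenerate, $\det\Psi\not\equiv0$, so the whole content of the theorem is the degree bound $\deg\det\Psi\le n-1$: granting it, the homogenisation $\ell$ of $\det\Psi$ with respect to the auxiliary coordinate $u^{n+1}$ is a form of degree $n-1$, and $\det g=\det\Phi\cdot\ell^{2}$ exhibits the singular variety as the double hypersurface $\{\ell=0\}$ of degree $n-1$ in $\mathbb P^n$.

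\textbf{Key step.} To obtain $\deg\det\Psi\le n-1$ I would show that the top-degree (degree-$n$) part of $\det\Psi$ vanishes identically. Write $\Psi=\Psi_1+\Psi_0$ with $(\Psi_1)_i^{\ \beta}=\psi_{im}^\beta u^m$ homogeneous of degree one and $(\Psi_0)_i^{\ \beta}=\omega_i^\beta$ constant; expanding $\det(\Psi_1+\Psi_0)$ by multilinearity in the columns shows that the degree-$n$ component of $\det\Psi$ equals $\det\Psi_1$. Here the skew-symmetry $\psi_{im}^\beta=-\psi_{mi}^\beta$ is exactly what is needed: the row vector $(u^1,\dots,u^n)$ lies in the left kernel of $\Psi_1(u)$, because $\sum_i u^i(\Psi_1)_i^{\ \beta}=\psi_{im}^\beta u^i u^m=0$ for every $\beta$. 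Thus $\Psi_1(u)$ is singular for all $u\neq0$, so $\det\Psi_1\equiv0$ as a polynomial, and $\deg\det\Psi\le n-1$. Note that this uses only the structural form (\ref{tri})--(\ref{lin}) and not the quadratic Jacobi relations (\ref{ab})--(\ref{zac}).

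\textbf{From affine to projective.} It then remains to match this with the intrinsic singular variety in $\mathbb P^n$. By the previous step $\det g=\det\Phi\cdot(\det\Psi)^2$ is a non-zero polynomial of degree at most $2n-2$ in $u^1,\dots,u^n$, while the singular variety of a quadratic line complex in $\mathbb P^n$ is a hypersurface of degree exactly $2n-2$ (\cite{Dolgachev}, Prop.~10.3.3), cut out by the homogenisation of $\det g$ to degree $2n-2$. That homogenisation equals $\det\Phi\cdot\ell^{2}$, with $\ell$ the degree-$(n-1)$ homogenisation of $\det\Psi$; hence the singular variety is $\{\ell=0\}$ taken with multiplicity two, a double hypersurface of degree $n-1$, as claimed. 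This can be corroborated on the $n=3$ canonical forms, where $\ell$ is respectively a conic ($g^{(1)},g^{(2)}$), a product of two planes one of which lies at infinity ($g^{(3)},g^{(4)}$), and the square of the plane at infinity ($g^{(5)},g^{(6)}$).

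\textbf{Main obstacle.} I expect essentially everything --- the determinant identity, the splitting $\Psi=\Psi_1+\Psi_0$, the homogenisation bookkeeping, and the invocation of the $2n-2$ degree count --- to be routine. The single non-mechanical point, and the one requiring care, is the left-kernel observation that pins the leading term to zero, together with checking that no \emph{second} degree drop is forced, so that the degree is genuinely $n-1$. In short, the crux of the argument is a one-line rank computation for $\Psi_1(u)$, and the remaining difficulty is only in tracking conventions --- in particular the role of the auxiliary coordinate $u^{n+1}$ and of the affine projection used in (\ref{A}) --- when translating the affine factorisation of $\det g$ into a statement about the singular variety of the complex in $\mathbb P^n$.
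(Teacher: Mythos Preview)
Your proposal is correct and follows essentially the same route as the paper: the factorisation $\det g=\det\phi\,(\det\psi)^2$ from the Balandin--Potemin representation, the observation that the top-degree part $\det(\psi^\gamma_{km}u^m)$ vanishes because the vector $(u^k)$ lies in the kernel by skew-symmetry, and the appeal to the known degree $2n-2$ of the singular variety. The paper's proof is terser but the logical content is identical.
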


\begin{proof}[Proof of Theorem \ref{thm1}]
  Formula (\ref{tri}) implies $\det g=\det \phi (\det \psi)^{2}$, the value
  $n-1$ for the degree follows from the fact that the singular variety of a
  quadratic complex has degree $2n-2$ \cite{Dolgachev}, Prop. 10.3.3. This can
  also be seen directly, indeed, $\psi _{k}^{\gamma }=\psi _{km}^{\gamma
  }u^{m}+\omega _{k}^{\gamma }$, and it remains to note that $\det(\psi
  _{km}^{\gamma }u^{m})$ vanishes identically due to the skew-symmetry
  condition $\psi _{km}^{\gamma }=-\psi _{mk}^{\gamma }$ (the matrix $\psi
  _{km}^{\gamma }u^{m}$ has zero eigenvalue corresponding to the eigenvector
  $u^k$). Thus, all terms of degree $n$ cancel identically, leaving an
  expression of degree $n-1$.
\end{proof}

\begin{theorem}
  \label{thm2} Quadratic complexes corresponding to $n$-component third-order
  Hamiltonian operators (\ref{casimir}) are in one-to-one correspondence with
  elements $Q\in S^2(\Lambda^2V)$ of rank $n$ that belong to the kernel of the
  map $S^2(\Lambda^2V)\to \Lambda^4V$.
\end{theorem}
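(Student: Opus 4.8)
The plan is to convert the invariant reformulation already obtained in this section --- the Balandin--Potemin formula \eqref{tri} rewritten as \eqref{A}, together with the constraint \eqref{King} --- into a statement about a single quadratic form, and then to invoke the projective geometry of $\Gr_2(V)$ to check that this passage loses no information. Throughout I keep the standing assumption $\det g\neq 0$. Starting from a non-degenerate operator \eqref{casimir}, its metric has the form \eqref{tri}, and after introducing the auxiliary coordinate $u^{n+1}$ and the skew matrix $P$ it takes the form \eqref{A}, involving an $n$-dimensional subspace $A=\operatorname{span}\langle A^\beta\rangle\subset\Lambda^2V$, a non-degenerate matrix $\phi_{\beta\gamma}$, and $\phi:=\phi_{\beta\gamma}A^\beta A^\gamma\in S^2A$ subject to \eqref{King}. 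I set $Q:=\phi$, viewed inside $S^2(\Lambda^2V)$ via $A\hookrightarrow\Lambda^2V$, and check three things. First, $\rk Q=n$: the matrix $\phi_{\beta\gamma}$ is non-degenerate of size $n$, and the $A^\beta$ are linearly independent, since a relation $\sum c_\beta A^\beta=0$ would force $\sum c_\beta\psi^\beta_i=0$, contradicting $\det\psi\neq 0$. Second, $Q$ lies in the kernel of $S^2(\Lambda^2V)\to\Lambda^4V$: the image of $Q$ under this map is $\phi_{\beta\gamma}A^\beta\wedge A^\gamma$, which vanishes by \eqref{King}; in particular $Q$ is \emph{automatically} in the normal form of the Definition in Section~\ref{sec:cf}, no correction by Pl\"ucker relations being required. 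Third, the Monge metric of the quadratic complex $\mathbf{p}Q\mathbf{p}^t=0$ is exactly $g$: restricting that equation to the lines through a point means substituting for $\mathbf{p}$ the entries $p^{ab}=u^adu^b-u^bdu^a$ of $P$, which turns $\mathbf{p}Q\mathbf{p}^t$ into $\phi_{\beta\gamma}\tr(A^\beta P)\tr(A^\gamma P)\vert_{u^{n+1}=1}$, i.e.\ into the right-hand side of \eqref{A}.

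Conversely, given $Q\in S^2(\Lambda^2V)$ of rank $n$ lying in the kernel of $S^2(\Lambda^2V)\to\Lambda^4V$, I regard $Q$ as a symmetric map $(\Lambda^2V)^*\to\Lambda^2V$, set $A:=\operatorname{Im}Q$ (so $\dim A=n$), pick a basis $A^1,\dots,A^n$ of $A$, and write $Q=\phi_{\beta\gamma}A^\beta A^\gamma$ with $\phi_{\beta\gamma}$ symmetric and non-degenerate. The kernel hypothesis is precisely $\phi_{\beta\gamma}A^\beta\wedge A^\gamma=0$, i.e.\ \eqref{King}, equivalently \eqref{ab}--\eqref{zac}. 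Expanding each skew matrix $A^\beta$ on the affine chart, $\tr(A^\beta P)\vert_{u^{n+1}=1}=\psi^\beta_i\,du^i$, recovers constants with $\psi^\beta_{im}=-\psi^\beta_{mi}$ as in \eqref{lin}, so that $g_{ij}=\phi_{\beta\gamma}\psi^\beta_i\psi^\gamma_j$ is of Balandin--Potemin type \eqref{tri}; by \cite{BP} it then solves \eqref{Killing}--\eqref{nonlin} and the system \eqref{eq:3}, hence defines a third-order Hamiltonian operator \eqref{casimir}. The only thing that needs verifying here is non-degeneracy, $\det\psi\not\equiv 0$ --- which is the standing hypothesis $\det g\neq 0$, and which by Theorem~\ref{thm1} amounts to the singular variety $\det g=0$ being a genuine hypersurface of degree $n-1$ rather than the whole of $\mathbb{P}^n$.

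To conclude that the two constructions are mutually inverse, the key point is that the Monge conformal structure of a quadratic complex recovers its normal-form representative. This is the one genuinely geometric step: the decomposable bivectors $P=u\wedge w$ with $u$ in the affine chart are Zariski dense in the affine cone over $\Gr_2(V)$, and by classical apolarity the symmetric squares $P\otimes P$ of points of this cone span precisely the annihilator of the space of Pl\"ucker quadrics in $S^2(\Lambda^2V)$, that is, the kernel of $S^2(\Lambda^2V)\to\Lambda^4V$ (this uses the splitting $S^2(\Lambda^2V)=\ker(\text{wedge})\oplus\operatorname{span}\langle\Omega^\alpha\rangle$ already noted after the Definition in Section~\ref{sec:cf}; cf.\ \cite{Dolgachev}). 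Hence two quadratic complexes with the same Monge metric differ only by a combination $c_\alpha\Omega^\alpha$ of Pl\"ucker relations, so their normal forms coincide; combined with the observation above that the $Q$ attached to an operator is already in normal form, this yields the claimed one-to-one correspondence. Tracking how a change of flat coordinates by projective transformations acts on \eqref{A} shows, moreover, that the bijection intertwines that action with the natural $SL(n+1)$-action on $S^2(\Lambda^2V)$, which gives the statement about orbits in the abstract.

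The main obstacle is the apolarity/spanning assertion used in the previous paragraph --- that passing from a complex to its Monge metric kills exactly the Pl\"ucker ambiguity and nothing more; everything else is bookkeeping with the Balandin--Potemin formula and the identification \eqref{A}$\leftrightarrow$\eqref{King}. One should also keep in mind that ``rank $n$'' by itself does \emph{not} force $\det g\not\equiv 0$: for example, if $A=\Lambda^2W$ for a $3$-dimensional subspace $W\subset V$ one obtains a rank-$3$ element of the kernel whose Monge metric is degenerate. Accordingly the correspondence is to be read under the standing non-degeneracy assumption --- equivalently, restricted to those rank-$n$ elements $Q$ for which the associated Monge metric is non-degenerate, which is the generic case.
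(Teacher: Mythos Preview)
Your argument follows the same route as the paper's: both rest on the Balandin--Potemin representation \eqref{tri}, its invariant rewriting \eqref{A}, and the identification of \eqref{ab}--\eqref{zac} with \eqref{King}. The paper's own proof is a two-line sketch (``the value $n$ for the rank follows from \eqref{A}; relations \eqref{ab}--\eqref{zac} are identical to \eqref{King}''), and you have filled in what that sketch leaves implicit --- in particular the converse direction and the injectivity of the correspondence, for which your apolarity argument (two complexes with the same Monge metric differ by a combination of Pl\"ucker quadrics, hence coincide once both are in normal form) is the right mechanism and is exactly what underlies the Remark after the Definition in Section~\ref{sec:cf}.

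Your final caveat about non-degeneracy is well taken and is not a defect of your proof but of the theorem's phrasing: the example $A=\Lambda^2W$ with $\dim W=3$ does produce a rank-$3$ element of the kernel with identically degenerate Monge metric, and indeed the paper itself meets this case (Case~1 in the second proof of Theorem~\ref{thm5}) and discards it. So the ``one-to-one'' should be read, as you say, under the standing assumption $\det g\not\equiv 0$, equivalently after excising the locus of $Q$'s whose associated $\det\psi$ vanishes identically.
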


\begin{proof}[Proof of Theorem \ref{thm2}]
  The value $n$ for the rank follows from representation (\ref{A}).
  It remains to note that relations (\ref{ab} - \ref{zac}) are identical to
  (\ref{King}).
\end{proof}

To summarise, a quadratic complex corresponds to an $n$-component third-order
Hamiltonian operator if and only if, in its normal form, the associated
quadratic form has rank $n$.

\section{Classification results}
\label{sec:class-results}

Based on formula (\ref{A}), in this section we address the classification of
homogeneous third-order Hamiltonian operators.
Our strategy will be as follows:

\begin{itemize}

\item Classify $n$-dimensional subspaces $A=span \langle A^1, \dots, A^n\rangle$ in $\Lambda^2V$ modulo natural action of  $SL(n+1)$. 
  Remarkably, this problem has been discussed in the context of metabelian Lie
  algebras \cite{Gauger, Gal}, and a complete classification is known for
  $n\leq 4 $. For $n=1, 2, 3, 4$ the total number of non-equivalent canonical
  forms equals $1, 1, 5, 38$, respectively. Apparently, for $n\geq 5$ the
  problem becomes `wild', and no classification is available.

\item For every subspace $A$ obtained at the previous step, reconstruct
  non-degenerate $\phi=\phi_{\beta \gamma} A^{\beta} A^{\gamma}\in S^2(A)$ that
  belong to the kernel of the natural map $S^2A\to \Lambda^4V$, that is, for
  which formula (\ref{King}) holds; note that this condition is {\it linear} in
  $\phi$.
 
  The constraint for $\phi$ can be equivalently reformulated as
  follows. Consider a generic element of $A$, $A(\xi)=A^{\alpha}\xi_{\alpha}$;
  the condition $\rk A(\xi)=2$ is given by the vanishing of the Pfaffians of all
  $4\times 4$ principal minors of $A(\xi)$, in total, $C_{n+1}^4$ quadratic
  relations of the form $\Omega^{s\beta \gamma}\xi_{\beta}\xi_{\gamma}=0$,
  $s=1, \dots, C_{n+1}^4$. The form $\phi$ must be apolar to every $\Omega^s$:
  $\phi_{\beta \gamma}\Omega^{s\beta \gamma}=0$.
  
  Note that in some cases the subspace $A$ may possess a non-trivial stabiliser
  under the action of $SL(n+1)$: this can be used to simplify the form of
  $\phi$.
  
\item Reconstruct the corresponding Monge metric $g$ by formula (\ref{A}).
\end{itemize}

All results below are formulated modulo (complex) projective transformations.
To save space we only present canonical forms for the corresponding Monge
metrics rather than Hamiltonian operators themselves.

\subsection{1-component case}

Every 1-component third-order Hamiltonian operator can be reduced to $\ddx{3}$,
see \cite{GP91, GP97, Doyle}.  This result goes back to \cite{OM85, AV86,
  OM87}.

\subsection{2-component case}
\label{sec:two-component-case}

Similarly, every 2-component operator can be brought to constant coefficient
form.

\begin{theorem}\label{thm4}\cite{FPV}
  Modulo projective transformations, every 2-component homogeneous third-order
  Hamiltonian operator can be reduced to constant coefficient form.
\end{theorem}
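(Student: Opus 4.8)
The plan is to specialise Theorem~\ref{thm2} to $n=2$, where all the relevant linear algebra degenerates. Here $\dim V=3$, so $\Lambda^2V$ is three-dimensional and $\Lambda^4V=0$; hence the normal-form constraint ``$Q$ lies in the kernel of $S^2(\Lambda^2V)\to\Lambda^4V$'' is automatic, and Theorem~\ref{thm2} reduces to the statement that $2$-component operators correspond to arbitrary $Q\in S^2(\Lambda^2V)$ of rank $2$ (consistently with the fact, noted after (\ref{zac}), that the quadratic relations (\ref{ab})--(\ref{zac}) are vacuous when $n=2$). Through the representation (\ref{A}), such a $Q$ amounts to a $2$-dimensional subspace $A=\mathrm{span}\langle A^1,A^2\rangle\subset\Lambda^2V$ together with a non-degenerate form $\phi=\phi_{\beta\gamma}A^\beta A^\gamma\in S^2A$, with equivalence given by the $SL(3)$-action described in Section~\ref{sec:class-results}.

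First I would observe that $SL(3)$ acts transitively on $2$-planes in the three-dimensional space $\Lambda^2V$ --- identifying $\Lambda^2V\cong V^*$, this is transitivity on $\mathbb{P}(V)$ --- so there is a single orbit of subspaces $A$ (matching the count $1$ for $n=2$ recorded in Section~\ref{sec:class-results}). A convenient representative, in a basis $e_1,e_2,e_3$ of $V$, is $A=\mathrm{span}\langle e_1\wedge e_3,\ e_2\wedge e_3\rangle$. Next I would compute the one-forms attached to this $A$: writing $P$ for the skew-symmetric $3\times3$ matrix with entries $u^adu^b-u^bdu^a$, one finds $\tr\big((e_i\wedge e_3)P\big)\vert_{u^3=1}$ proportional to $du^i$ for $i=1,2$. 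Substituting into (\ref{A}) then gives $g=\phi_{\beta\gamma}\,\tr(A^\beta P)\,\tr(A^\gamma P)\vert_{u^3=1}$, a constant-coefficient quadratic form in $du^1,du^2$. A final constant linear change of the flat coordinates $u^1,u^2$ --- an affine, hence projective, transformation --- diagonalises the non-degenerate constant matrix $(\phi_{\beta\gamma})$ (to the identity over $\mathbb{C}$), and so the operator (\ref{casimir}), and therefore (\ref{third}), acquires constant coefficients.

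The only step that is not a direct computation is the single-orbit claim together with the identification of this $SL(n+1)$-action with projective transformations of the Hamiltonian operator; granting the framework of Section~\ref{sec:class-results}, this is immediate here because a three-dimensional space has an essentially unique $2$-plane. As an alternative, more self-contained route I would instead start from Theorem~\ref{thm1}: for $n=2$ the singular variety $\det g=0$ is a double hypersurface of degree $n-1=1$, i.e.\ $\det g$ is the square of a linear form $l(u)$; applying the projective transformation $\tilde u^i=l^i(u)/l(u)$, $\tilde g=g/l^4(u)$ that sends the line $l=0$ to infinity makes $\det\tilde g$ a non-zero constant, after which a short computation with the Monge form (\ref{Monge}) and the nonlinear relation (\ref{nonlin}) shows that a metric of constant determinant in this class must in fact be constant. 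I expect this second argument to require a genuine (if short) calculation, whereas the first is essentially a corollary of Theorem~\ref{thm2}.
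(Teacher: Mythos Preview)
Your primary argument is essentially the paper's own proof: pick the representative subspace $\langle e^1\wedge e^3,\ e^2\wedge e^3\rangle\subset\Lambda^2(V^3)$, compute $\tr(A^\beta P)\vert_{u^3=1}\propto du^\beta$, and conclude that the Monge metric (\ref{A}) is constant. You add the transitivity justification for the single orbit and a final diagonalisation of $\phi$, both of which the paper leaves implicit, and your alternative route via Theorem~\ref{thm1} is extra; the core argument coincides.
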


\begin{proof}[Proof of Theorem \ref{thm4}]
  For $n=2$ formula (\ref{A}) involves a 2-dimensional subspace $\langle A^1,
  A^2 \rangle$ in $\Lambda^2(V^3)$. Without any loss of generality one can set
  $A^1=e^1\wedge e^3, \ A^2=e^2\wedge e^3$ (here and in what follows we
  identify $e^i\wedge e^j$ with the corresponding skew-symmetric matrix). In
  the affine chart $u^3=1$ this gives $\tr (A^1P)=2du^1, \ \tr (A^2P)=2du^2$,
  so that the Monge metric $g$ given by (\ref{A}) is constant.
\end{proof}

\subsection{3-component case}
\label{sec:three-component-case}

In this case we have 6 canonical forms: 

\begin{theorem}\label{thm5}\cite{FPV} Modulo  projective
  transformations, the metric of every 3-component homogeneous third-order
  Hamiltonian operator (\ref{casimir}) can be reduced to one of the 6 canonical
  forms:
\begin{gather*}
  g^{(1)}=\begin{pmatrix} (u^{2})^{2}+c & -u^{1}u^{2}-u^{3} & 2u^{2} \\
    -u^{1}u^{2}-u^{3} & (u^{1})^{2}+c(u^{3})^{2} & -cu^{2}u^{3}-u^{1} \\ 2u^{2}
    & -cu^{2}u^{3}-u^{1} & c(u^{2})^{2}+1
  \end{pmatrix}, \\ g^{(2)} = \begin{pmatrix}
    (u^{2})^{2}+1 & -u^{1}u^{2}-u^{3} & 2u^{2} \\
    -u^{1}u^{2}-u^{3} & (u^{1})^{2} & -u^{1} \\
    2u^{2} & -u^{1} & 1
  \end{pmatrix}, \quad g^{(3)} = \begin{pmatrix}
    (u^{2})^{2}+1 &  -u^{1}u^{2}&0 \\
    -u^1u^2 & (u^1)^2 & 0 \\
    0 & 0 & 1%
  \end{pmatrix}, \\ g^{(4)}= \begin{pmatrix} -2u^2 & u^1 & 0
    \\
    u^1 & 0 & 0
    \\
    0 & 0 & 1
  \end{pmatrix}, \quad g^{(5)}=\begin{pmatrix} -2u^2 & u^1 & 1
    \\
    u^1 & 1 &0
    \\
    1 & 0 & 0
  \end{pmatrix}, \quad g^{(6)} =
  \begin{pmatrix}
    1 & 0 & 0\\ 0 & 1 & 0\\ 0 & 0 & 1
  \end{pmatrix}.
\end{gather*}
\end{theorem}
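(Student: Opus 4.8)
The plan is to apply the three-step strategy outlined at the start of Section~\ref{sec:class-results} in the case $n=3$, where everything reduces to linear algebra over $V=V^4$. First I would invoke the classification of $3$-dimensional subspaces $A=\mathrm{span}\langle A^1,A^2,A^3\rangle\subset\Lambda^2V^4$ modulo $SL(4)$: by the cited results on metabelian Lie algebras \cite{Gauger, Gal}, there are exactly $5$ orbits, and I would list an explicit representative $\{A^1,A^2,A^3\}$ (written as $4\times 4$ skew-symmetric matrices, or as $2$-forms $e^i\wedge e^j$) for each. The generic orbit is the one where a generic element $A(\xi)=A^\alpha\xi_\alpha$ has rank $4$; the degenerate orbits arise as various limits or incidence configurations (e.g.\ all $A^\alpha$ sharing a common factor $e^4$, or spanning $\Lambda^2$ of a $3$-dimensional subspace, etc.).

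For each representative subspace $A$, the second step is to determine all non-degenerate $\phi=\phi_{\beta\gamma}A^\beta A^\gamma\in S^2A$ lying in the kernel of $S^2A\to\Lambda^4V$, i.e.\ satisfying \eqref{King}. Since $n=3$, there is a single Plücker-type relation to impose ($C^4_{n+1}=C^4_4=1$), so \eqref{King} is \emph{one} linear equation on the six unknowns $\phi_{\beta\gamma}$; equivalently $\phi$ must be apolar to the single quadric $\Omega$ cutting out $\mathrm{rk}\,A(\xi)=2$ inside $A$. Solving this linear system gives a family of admissible $\phi$'s for each $A$. I would then use the stabiliser of $A$ in $SL(4)$ — which acts on the solution space of $\phi$'s — to reduce each family to a short list of canonical $\phi$'s, keeping track of any residual free parameter $c$ (this is the origin of the parameter in $g^{(1)}$). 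Some choices of $A$ may admit no non-degenerate $\phi$ in the kernel, or the admissible $\phi$'s may all be degenerate, in which case that orbit contributes nothing; conversely one expects the generic orbit to split into several non-equivalent metrics once $\phi$ is pinned down, which is why $5$ subspaces yield $6$ metrics.

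The third step is purely computational: substitute each pair $(A,\phi)$ into formula \eqref{A}, restricting to the affine chart $u^4=1$ (replacing $v^i=u^i+du^i$ so that $P$ has entries $u^adu^b-u^bdu^a$, and reading off $g_{ij}$ from $g=\phi_{\beta\gamma}\,\tr(A^\beta P)\tr(A^\gamma P)$), and verify that the result matches $g^{(1)}$–$g^{(6)}$ up to relabelling. One should also confirm non-degeneracy $\det g\neq 0$ and check that the six metrics are pairwise inequivalent under projective transformations — most cleanly via the projective invariant singular variety $\det g=0$ computed in the introduction (double quadric for $g^{(1)},g^{(2)}$; two double planes for $g^{(3)},g^{(4)}$; quadruple plane for $g^{(5)},g^{(6)}$), supplemented by the Cotton tensor to separate $g^{(1)},g^{(2)},g^{(3)}$ (non-conformally-flat) from $g^{(4)},g^{(5)},g^{(6)}$ (flat), and by finer invariants within each pair. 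The main obstacle is the bookkeeping in the second step: correctly enumerating the solutions of the linear constraint \eqref{King} on each orbit and then carefully exhausting the action of each (possibly positive-dimensional) stabiliser so that no canonical form is missed and none is double-counted; the rest is routine once the $5$ subspace representatives and their stabilisers are in hand.
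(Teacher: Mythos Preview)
Your proposal is correct and follows essentially the same route as the paper's second proof of Theorem~\ref{thm5}: the five $SL(4)$-orbits of $3$-planes in $\Lambda^2V^4$ from \cite{Gauger,Gal} are listed, for each the single linear constraint \eqref{King} on $\phi$ is solved and reduced by the stabiliser, and the resulting Monge metrics \eqref{A} in the chart $u^4=1$ yield $g^{(1)}$--$g^{(6)}$ (one orbit gives a degenerate metric and is discarded, while two orbits each split into two inequivalent metrics, accounting for the total of six). The paper also records two alternative proofs --- one via Segre types of $C=Q\Omega^{-1}$ and one via normal forms of pairs of quadratic forms --- which you may find instructive to compare.
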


Here we sketch 3 proofs of this classification result. Based on different
ideas, they may be of interest on their own. The first proof is based on the
theory of quadratic complexes and their Segre normal forms. The second proof is
based on the classification of 3-dimensional subspaces in $\Lambda^2V^4$ modulo
natural action of $SL(4)$, that is, on the classification of $SL(4)$-orbits in
$\Gr_3(\Lambda^2V^4)$. Finally, the third proof uses  explicit parametrisation
of quadratic forms of rank 3 that are apolar to the Pl\"ucker quadric.

\begin{proof}[First proof of Theorem \ref{thm5}] 
  Let us begin with the necessary information from the theory of quadratic
  complexes.  The Pl\"ucker embedding of the Grassmannian $\Gr_2(V^4)$ into
  $\Lambda^2(V^4)$, identified with the space of $4\times 4$ skew-symmetric
  matrices $P=p^{ij}$, is the Pl\"ucker quadric, $ p^{12}p^{34} + p^{31}p^{24}
  + p^{14}p^{23}=0$ (the Pfaffian of $P$). Let $\Omega$ be the $6\times 6$
  symmetric matrix corresponding to the Pl\"ucker quadric. A quadratic line
  complex is the intersection of the Pl\"ucker quadric with another homogeneous
  quadratic equation in the Pl\"ucker coordinates, defined by a $6\times 6$
  symmetric matrix $Q$.  The key invariant of a quadratic complex is the Jordan
  normal form of the matrix $C=Q\Omega^{-1}$, known as its Segre
  type. According to Theorem \ref{thm2}, the matrix $C$ of a quadratic complex
  that corresponds to a 3-component third-order Hamiltonian operator, satisfies
  the conditions $\rk C=3, \ \tr C=0$, which impose strong constraints on the
  Segre type. Recall that the Segre symbol carries information about the
  number/sizes of Jordan blocks.  Thus, the symbol $[111111]$ indicates that
  the Jordan form of $C$ is diagonal; the symbol $[222]$ indicates that the
  Jordan form of $C$ consists of three $2\times 2$ Jordan blocks, etc.  We will
  also use `refined' Segre symbols with additional round brackets indicating
  coincidences among the eigenvalues of some of the Jordan blocks, e.g., the
  symbol $[(11)(11)(11)]$ denotes the subcase of $[111111]$ with three pairs of
  coinciding eigenvalues, the symbol $[(111)(111)]$ denotes the subcase with
  two triples of coinciding eigenvalues, etc.  Theorem \ref{thm5} was proved in
  \cite{FPV} by going through the list of 11 Segre types of quadratic complexes
  as listed in \cite{Jess}, and selecting those whose Monge metrics fulfil
  (\ref{nonlin}). A shorter and less computational approach is based on the
  remark that the only Segre types compatible with the constraints $\rk C=3, \
  \tr C=0$ are $[(111)111]$, $[(111)12]$, $[11(112)]$, $[(114)]$, $[(123)]$,
  $[(222)]$. These are exactly the 6 cases of Theorem \ref{thm5}. Recall that
  the singular surface of a generic quadratic complex is Kummer's quartic
  surface. According to Theorem \ref{thm1}, for Monge metrics associated with
  third-order Hamiltonian operators this quartic degenerates into a double
  quadric (which may further split into a pair of planes).
\end{proof}

\begin{proof}[Second proof of Theorem \ref{thm5}]
  This proof is based on the classification of 3-dimensional subspaces in
  $\Lambda^2(V^4)$ \cite{Gauger, Gal}. There are 5 canonical forms, that we list in the format $ A=\langle A^1,\ A^2, \ A^3 \rangle$:
$$
\begin{array}{c}
 \langle e^1 \wedge e^2,\ e^1 \wedge e^3,\ e^2\wedge e^3 \rangle,\\
 \langle e^1 \wedge e^4,\ e^2 \wedge e^4,\ e^3\wedge e^4 \rangle,\\
 \langle e^1 \wedge e^2,\ e^2 \wedge e^4,\ e^3\wedge e^4\rangle, \\
\langle e^1 \wedge e^2,\ e^3\wedge e^4,\ e^1\wedge e^3+e^2\wedge e^4 \rangle, \\
 \langle e^1\wedge e^4+e^2\wedge e^3,\ e^2\wedge e^4,\ e^3\wedge e^4\rangle.\\
\end{array}
$$
Modulo permutations of indices, these are the cases 102, 99, 94, 93, 96 in
Table 2 of \cite{Gal}, respectively. Calculating  $\phi=\phi_{\beta \gamma} A^{\beta} A^{\gamma}\in S^2(A)$  that satisfy condition (\ref{King})  we arrive at the corresponding Monge metrics (\ref{A}); in all cases we use the affine projection $u^4=1$.

\medskip

\noindent {\bf Case 1} gives a degenerate metric, and does not correspond to
a non-trivial Hamiltonian operator.

\medskip

\noindent {\bf Case 2} corresponds to the constant metric $g^{(6)}$ (after
the affine projection $u^4=1$).

\medskip

\noindent {\bf Case 3} gives the  metric
$$
\begin{array}{c}
g=a(p^{12})^2+b(p^{24})^2+c(p^{34})^2+2\alpha p^{12}p^{24}+2\beta p^{24}p^{34}=\\
\ \\
a(u^1du^2-u^2du^1)^2+b(du^2)^2+c(du^3)^2-2\alpha (u^1du^2-u^2du^1)du^2+2\beta du^2du^3.
\end{array}
$$
Here $\det g=(abc-a\beta^2-c\alpha^2)(u^2)^2$, so the singular variety consists
of 2 double planes (one of them at infinity).  The subcase $a=0$ is affinely
equivalent to $g^{(4)}$, the general case $a\ne 0$ is affinely equivalent to
$g^{(3)}$.

\medskip

\noindent {\bf Case 4} gives the  metric
$$
\begin{array}{c}
g=a(p^{12})^2+b(p^{34})^2+c(p^{13}+p^{24})^2+2c p^{12}p^{34}+2(\alpha p^{12}+\beta p^{34})(p^{13}+p^{24})=\\
\ \\
a(u^1du^2-u^2du^1)^2+b(du^3)^2+c(u^1du^3-u^3du^1-du^2)^2\\
-2c (u^1du^2-u^2du^1)du^3+2(\alpha(u^1 du^2-u^2du^1)-\beta du^3)(u^1du^3-u^3du^1-du^2).
\end{array}
$$
We have
$
\det g =(abc+2\alpha \beta c-c^3-\alpha^2b-\beta^2a)
(u^{1}  u^{3}
  +u^{2})
^{2},
$
so that the singular variety is a double quadric.  This leads to the cases $g^{(1)}, g^{(2)}$. Here the case of $g^{(2)}$ is 
distinguished by $27\mu^2+\nu^3=0$ where $\mu=abc+2\alpha \beta c-c^3-\alpha^2b-\beta^2a, \ \nu=2\alpha \beta-ab-3c^2$.

\medskip

\noindent {\bf Case 5} gives the  metric
$$
\begin{array}{c}
g=a(p^{24})^2+b(p^{34})^2+2(\alpha p^{24}+\beta p^{34})(p^{14}+p^{23})+ 2\gamma p^{24}p^{34}=\\
\ \\
a(du^2)^2+b(du^3)^2-2(\alpha du^2+\beta du^3)(u^2du^3-u^3du^2-du^1)+2\gamma du^2du^3.
\end{array}
$$
We have $\det g= 2 \alpha \beta \gamma-\alpha^2b-\beta^2a=const$, so that the
singular variety is a quadruple plane at infinity. This metric is affinely
equivalent to $g^{(5)}$.
\end{proof}

\begin{proof}[Third proof of Theorem \ref{thm5}]

 Introducing 
$$
 {\bf p}=( du^1, \ du^2, \ du^3, \ u^2du^3-u^3du^2, \ u^3du^1-u^1du^3, \ u^1du^2-u^2du^1),
 $$
 one can represent a Monge metric in the form $g= {\bf p}Q {\bf p}^t$, where
 $Q$ is a $6\times 6$ symmetric matrix. According to Theorem \ref{thm2}, we
 have $\rk Q=3, \ \tr Q\Omega^{-1}=0$. Thus, $Q$ and $\Omega$ can be represented
 in the form
$$
Q=\left(
\begin{array}{cc}
A & M\\
M^t & M^tA^{-1}M
\end{array}
\right), ~~~ \Omega=\left(
\begin{array}{cc}
0 & E\\
E & 0
\end{array}
\right),
$$
where $A, M$ and $E$ are $3\times 3$ matrices ($A$ is symmetric, $E$ is the
identity matrix). Note that any symmetric matrix $Q$ of rank 3 can be
represented in this form (one can always assume $A$ to be non-degenerate via a
translation of $u^i$). The condition $\tr Q\Omega^{-1}=0$ reduces to $\tr
M=0$. The classification of normal forms is performed modulo transformations
$Q\to XQX^t$ that preserve $\Omega$: $X\Omega X^t=\Omega$. Setting
$$
X=\left(
\begin{array}{cc}
X_1 & X_2\\
X_3 & X_4
\end{array}
\right),
$$
the condition $X\Omega X^t=\Omega$ reduces to $X_2X_1^t+X_1X_2^t=0,\
X_4X_3^t+X_3X_4^t=0,\ X_2X_3^t+X_1X_4^t=E$. Our goal is to bring $Q$ to normal
form by using special transformations of this type. Taking $X_1=X_4=E$,
$X_2=0$, one obtains that $X_3$ must be skew-symmetric. Applying this
transformation to $Q$ one obtains $A\to A, \ M\to M-AX_3$ (note that this
transformation preserves the condition $\tr M=0$). Thus, $A^{-1}M\to
A^{-1}M-X_3$, which allows one to kill the skew-symmetric part of $A^{-1}M$.
Hence, one can assume that $A^{-1}M=B$ is symmetric, so that $M=AB$, and $Q$
takes the form
$$
Q=\left(
\begin{array}{cc}
A & AB\\
BA & BAB
\end{array}
\right);
$$
recall the condition $\tr M=\tr AB=0$. Applying another transformation,
$X_2=X_3=0, \ X_4=(X_1^{-1})^t$, one obtains $A\to X_1AX_1^t, \ B\to
(X_1^{-1})^tBX_1^{-1}$. Thus, both $A^{-1}$ and $B$ transform in the same way,
and one can apply the theory of normal forms of pairs of quadratic
forms. Modulo complex transformations, there are 3 cases (note that in all of
them $A^{-1}=A$).

\medskip

\noindent {\bf Case 1.} In the generic (diagonal) case one has
$$
A=\left(\begin{array}{ccc}
1 & 0&0\\
0& 1&0\\
0&0&1
\end{array}
\right), ~~~ B=\left(
\begin{array}{ccc}
a & 0&0\\
0& b&0\\
0&0&c
\end{array}
\right),
$$
$\tr AB=0$ gives $a+b+c=0$. The corresponding Monge metric takes the form
$$
g=(du^1+a(u^2du^3-u^3du^2))^2+(du^3+b(u^3du^1-u^1du^3))^2+(du^3+c(u^1du^2-u^2du^1))^2.
$$
Here we have 3 cases: the generic case is equivalent to the metric $g^{(1)}$ of
Theorem \ref{thm5}, the case $b=-a, c=0$ corresponds to $g^{(3)}$, and the case
$a=b=c=0$ corresponds to $g^{(6)}$.

\medskip

\noindent {\bf Case 2.} In the  case of one $2\times 2$ Jordan block one has
$$
A=\left(\begin{array}{ccc}
0 & 1&0\\
1& 0&0\\
0&0&1
\end{array}
\right), ~~~ B=\left(
\begin{array}{ccc}
1 & a&0\\
a& 0&0\\
0&0&b
\end{array}
\right),
$$
$\tr AB=0$ gives $2a+b=0$. The corresponding Monge metric takes the form
$$
g=(du^3-2a(u^1du^2-u^2du^1))^2+
$$
$$
2(du^2+a(u^2du^3-u^3du^2))(du^1+a(u^3du^1-u^1du^3)+u^2du^3-u^3du^2).
$$
For $a\ne 0$ this is the case $g^{(2)}$, $a=0$ corresponds to $g^{(5)}$.

\noindent {\bf Case 3.} In the  case of one $3\times 3$ Jordan block one has
$$
A=\left(\begin{array}{ccc}
0 & 0&1\\
0& 1&0\\
1&0&0
\end{array}
\right), ~~~ B=\left(
\begin{array}{ccc}
0 & 1&a\\
1& a&0\\
a&0&0
\end{array}
\right),
$$
$\tr AB=0$ gives $a=0$. The corresponding Monge metric takes the form
$$
g=2du^3(du^1+u^3du^1-u^1du^3)+(du^2+u^2du^3-u^3du^2)^2.
$$
This is  the case $g^{(4)}$.

\end{proof}

\subsection{4-component case}
\label{sec:four-component-case}

For $n=4$ formula (\ref{A}) involves a 4-dimensional subspace $\langle A^1,
\dots, A^4 \rangle$ in the space of $5\times 5$ skew-symmetric forms,
equivalently, a point in the Grassmannian $\Gr_4(\Lambda^2V^5)$. Modulo natural
action of $SL(5)$, the classification of such subspaces was obtained in
\cite{Gal} in the context of metabelian Lie algebras of signature $(5, 4)$.
Altogether, there are 38 non-equivalent normal forms. Fixing one of the normal
forms, one can reconstruct $\phi$ from condition (\ref{King}). For $n=4$ this
gives 5 conditions for the 10 matrix elements of $\phi$, leaving us with the
freedom of at least 5 arbitrary constants. This freedom can be reduced if the
subspace has a non-trivial stabiliser under the action of $SL(5)$. Note that
the requirement of non-degeneracy of $\phi$ eliminates some of the 38 subcases,
leaving 32 canonical forms.

As an example, let us consider the generic case of the classification
\cite{Gal} that corresponds to the subspace
$$
\langle 
e^1\wedge e^2+e^4\wedge e^5, \ \  e^2\wedge e^5+e^3\wedge e^4, \  \ e^1\wedge e^5+e^2\wedge e^4, \ \  e^1\wedge e^4+e^2\wedge e^3
\rangle. 
$$
This subspace has trivial stabiliser, and generates an open dense orbit of
dimension 24 in $\Gr_4(\Lambda^2V^5)$: note that $\dim SL(5)= \dim
\Gr_4(\Lambda^2V^5)=24$. It gives rise to the 5-parameter Monge metric
\[
\begin{array}{c}
g=\varphi_1(p^{12}+p^{45})^2+2\varphi_2 (p^{12}+p^{45})(p^{15}+p^{24})+\varphi_3(p^{15}+p^{24})^2\\
\ \\
+\varphi_4(p^{25}+p^{34})^2+2(\varphi_1+\varphi_3) (p^{25}+p^{34})(p^{14}+p^{23})+\varphi_5(p^{14}+p^{23})^2\\
\ \\
-2\varphi_4(p^{12}+p^{45})(p^{14}+p^{23})-2\varphi_5(p^{12}+p^{45})(p^{25}+p^{34});
\end{array}
\]
without any loss of generality one can use the affine chart $u^5=1$. All
parameters are essential. The singular variety of this metric is a double
cubic,
\[
u^4(u^{1})^{2}  
  +u^{1}  u^{2}  u^{3}
  -(u^{2})^{3}
  -u^{2}
  -u^{3}  u^{4}
  +(u^{4})^{3}
=0.
\]
Table 1 below contains a complete list of Monge metrics/singular varieties
corresponding to normal forms of 4-dimensional subspaces in $\Lambda^2V^5$; the
first column contains a reference to Table 2 of \cite{Gal}.  The last column
gives dimensions of stabilisers of these subspaces under the action of
$SL(5)$. We always use the affine chart $u^5=1$.

\begin{center}
  \begin{longtable}{ | l | l | l |}
    \caption{Monge metrics for $n=4$}
    \\
    \hline\multicolumn{1}{|c|}{Subspace in $\Lambda^2(V^5)$} &
    \multicolumn{1}{c|}{Monge metric $g$ / singular variety $ $ } &
    \multicolumn{1}{c|}{Stab}\\ \hline
\endfirsthead

\multicolumn{3}{c}%
{\tablename\ \thetable{} -- continued from previous page}
\\
\hline\multicolumn{1}{|c|}{Subspace in $\Lambda^2(V^5)$} &
\multicolumn{1}{c|}{Monge metric $g$ / singular variety $$}&
\multicolumn{1}{c|}{Stab} \\
\hline
\endhead

\hline \multicolumn{3}{|c|}{Continued on next page} \\
\hline
\endfoot

\endlastfoot

$\displaystyle
\begin{array}{l}
  e^1\wedge e^2+e^4\wedge e^5
  \\
  e^2\wedge e^5+e^3\wedge e^4
  \\
  e^1\wedge e^5+e^2\wedge e^4
  \\
  e^1\wedge e^4+e^2\wedge e^3
  \ \\
\text{(no.\ 11 in \cite{Gal})}
\end{array}
$

&
$\displaystyle
\begin{array}{c}
\\[0.01mm]
  g=\varphi_1(p^{12}+p^{45})^2+2\varphi_2 (p^{12}+p^{45})(p^{15}+p^{24})
  \\
  +\varphi_3(p^{15}+p^{24})^2 + \varphi_4(p^{25}+p^{34})^2
  +2(\varphi_1 + \varphi_3) (p^{25}+p^{34})(p^{14}+p^{23})
  \\
  +\varphi_5(p^{14}+p^{23})^2
  -2\varphi_4(p^{12}+p^{45})(p^{14}+p^{23})
  -2\varphi_5(p^{12}+p^{45})(p^{25}+p^{34})
  \\[3mm]
 \text{Singular variety is  a  double cubic,}
    \\[3mm]
u^4(u^{1})^{2}  
  +u^{1}  u^{2}  u^{3}
  -(u^{2})^{3}
  -u^{2}
  -u^{3}  u^{4}
  +(u^{4})^{3}
=0
\end{array}
$
&
0
\\

\hline

$\displaystyle
\begin{array}{l}
  e^1\wedge e^4+e^3\wedge e^5
  \\
  e^2\wedge e^5+e^3\wedge e^4
  \\
  e^1\wedge e^5+e^2\wedge e^4
  \\
  e^1\wedge e^3
  \ \\
  \text{(no.\ 15 in \cite{Gal})}
\end{array}
$
&
$\displaystyle
\begin{array}{c}
\\[0.01mm]
g=\varphi_1(p^{31})^2 + \varphi_2(p^{42}+p^{51})^2 + 2\varphi_2(p^{52}+p^{43})(p^{53}+p^{41}) 
\\
 \varphi_3(p^{41}+p^{53})^2 + 2\varphi_3(p^{52}+p^{43})(p^{51}+p^{42}) +
 \varphi_4(p^{43}+p^{52})^2 \\
 + 2\varphi_4(p^{53}+p^{41})(p^{51}+p^{42}) +
 2\varphi_{5}
(p^{53}+p^{41})p^{31}
  \\[3mm]
 \text{Singular variety is a double cubic,}
    \\[3mm]
u^1u^2u^4 
  -(u^{1})^2 +u^{2}  u^{3}
  -(u^{3})^{2}u^{4}
=0
\end{array}
$

& 
1
\\
\hline

$\displaystyle
\begin{array}{l}
e^2\wedge e^4+e^3\wedge e^5
\\
e^2\wedge e^5+e^3\wedge e^4
\\
e^1\wedge e^5+e^2\wedge e^3
  \\
e^1\wedge e^2
  \ \\
\text{(no.\ 21 in \cite{Gal})}
\end{array}
$
&
$\displaystyle
\begin{array}{c}
\\[0.01mm]
  g=\varphi_{1}(p^{21})^2 + 2\varphi_{2}(p^{51} + p^{32})p^{21}
  +\varphi_{3}(p^{51}+p^{32})^2 -2\varphi_{3}(p^{53}+p^{42})p^{21}
 \\
 \varphi_{4}(p^{53}+p^{42})^2
+ \varphi_{4}(p^{52}+p^{43})^2 +  2\varphi_{5}(p^{53}+p^{42})(p^{52}+p^{43}) 
  \\[3mm]
 \text{Singular variety  is a  double  cubic,}
    \\[3mm]
u^1u^2u^4 
  -u^{1}  u^{3}+(u^2)^3
  -u^{2}(u^{3})^2
=0
\end{array}
$

& 
2
\\
\hline

$\displaystyle
\begin{array}{l}
e^2\wedge e^5+e^3\wedge e^4
  \\
  e^1\wedge e^5
  \\
  e^1\wedge e^3+e^2\wedge e^3
  \\
  e^2\wedge e^4
    \ \\
\text{(no.\ 26 in \cite{Gal})}

\end{array}
$
&
$\displaystyle
\begin{array}{c}
\\[0.01mm]
  g=\varphi_{1}(p^{42})^2 + \varphi_{2}(p^{32}+p^{31})^2 + 2\varphi_{3}(p^{52}+p^{51} +p^{43})(p^{32}+p^{31}) \\
   + \varphi_{4}(p^{51})^2 +
  2\varphi_{5}(p^{52}+p^{43})p^{42}
  \\[3mm]
 \text{Singular variety is  a  double cubic,}
    \\[3mm]
u^1u^3u^4-u^1u^2 
  -(u^{2})^2 
=0
\end{array}
$
&
2
\\
\hline

$
\displaystyle
\begin{array}{l}
  e^2\wedge e^5
  \\
  e^1\wedge e^5+e^2\wedge e^4\\
  \hphantom{e^1\wedge e^5}+e^3\wedge e^4
  \\
  e^1\wedge e^4+e^2\wedge e^3
  \\
  e^1\wedge e^3
    \ \\
\text{(no.\ 28 in \cite{Gal})}

\end{array}
$
&
$\displaystyle
\begin{array}{c}
\\[0.01mm]
  g=\varphi_{1}(p^{31})^2 + 2\varphi_{2}(p^{41}+p^{32})p^{31}
  + 2\varphi_{3}(p^{51} + p^{43} +
  p^{42})p^{31}
  + \varphi_{3}(p^{41} + p^{32})^2\\
  + 2\varphi_{4}p^{52}p^{31} + 2\varphi_{4}(p^{51}+p^{43}+p^{42})(p^{41}+p^{32}) +
\varphi_{5}(p^{52})^2
  \\[3mm]
\text{Singular variety is a double cubic,}
    \\[3mm]
(u^1)^2-u^1u^2u^4 +
 u^{2}  (u^{3})^2+u^3(u^2)^2
=0
\end{array}
$

&
3
\\
\hline

$\displaystyle
\begin{array}{l}
  e^2\wedge e^5+e^3\wedge e^4
  \\
  e^1\wedge e^5+e^2\wedge e^4
  \\
  e^2\wedge e^3
  \\
  e^1\wedge e^4
    \ \\
\text{(no.\ 31 in \cite{Gal})}

\end{array}
$
&
$\displaystyle
\begin{array}{c}
\\[0.01mm]
  g=\varphi_{1}(p^{41})^2 + \varphi_{2}(p^{32})^2 + 2\varphi_{3}(p^{51}
  + p^{42})p^{41} \\
  + \varphi_{4}(p^{51}+p^{42})^2 +
2 \varphi_{4}(p^{52}+p^{43})p^{41}

 + 2\varphi_{5}(p^{52}
+ p^{43})p^{32}
  \\[3mm]
 \text{Singular variety  is a  double  cubic,}
    \\[3mm]
u^1u^3u^4 +u^1u^2
  -(u^{2})^2u^{4}
=0
\end{array}
$

&
3
\\
\hline

$\displaystyle
\begin{array}{l}
  e^2\wedge e^5+e^3\wedge e^4
  \\
  e^1\wedge e^5+e^2\wedge e^4
  \\
  e^1\wedge e^4+e^2\wedge e^3
  \\
  e^1\wedge e^3
    \ \\
\text{(no.\ 34 in \cite{Gal})}

\end{array}
$
&
$\displaystyle
\begin{array}{c}
\\[0.01mm]
  g=\varphi_{1}(p^{31})^2 + 2\varphi_{2}(p^{41}+
 p^{32})p^{31}
  + \varphi_{3}(p^{41} + p^{32})^2 +
 2 \varphi_{3}(p^{51}+p^{42})p^{31}
  \\
 +
  2\varphi_{4}(p^{52}+p^{43})p^{31}
  + 2\varphi_{4}(p^{51}+p^{42})(p^{41}+p^{32})
  \\
 +
  \varphi_{5}(p^{51}+p^{42})^2
  + 2\varphi_{5}(p^{52}+p^{43})(p^{41} +
 p^{32})
  \\[3mm]
 \text{Singular variety is  a double cubic,}
    \\[3mm]
u^1u^2u^4 +u^1(u^3)^2
  -(u^{1})^2-u^3(u^2)^2
=0
\end{array}
$

&
4
\\
\hline

$\displaystyle
\begin{array}{l}
  e^4\wedge e^5
  \\
  e^1\wedge e^3+e^2\wedge e^4
  \\
  e^1\wedge e^5+e^2\wedge e^3
  \\
  e^1\wedge e^2
    \ \\
\text{(no.\ 35 in \cite{Gal})}

\end{array}
$
&
$\displaystyle
\begin{array}{c}
\\[0.01mm]
  g=\varphi_{1}(p^{21})^2 + 2\varphi_{2}(p^{51} +
p^{32})p^{21}
  + 2\varphi_{3}(p^{42} + p^{31})p^{21} -
  2\varphi_{4}p^{54}p^{21}
  \\
  + 2\varphi_{4}(p^{51}+p^{32})(p^{42}+p^{31}) + \varphi_{5}(p^{54})^2
  \\[3mm]
 \text{Singular variety is a double cubic,}
    \\[3mm]
(u^1)^2
  -u^4(u^{2})^2
=0
\end{array}
$
&
4
\\
\hline

$\displaystyle
\begin{array}{l}
  e^2\wedge e^3+e^4\wedge e^5
  \\
  e^2\wedge e^5+e^3\wedge e^4
  \\
  e^1\wedge e^4
  \\
  e^1\wedge e^3
    \ \\
\text{(no.\ 36 in \cite{Gal})}

\end{array}
$
&
$\displaystyle
\begin{array}{c}
\\[0.01mm]
  g=\varphi_{1}(p^{31})^2 + 2\varphi_{2}p^{41}p^{31} + \varphi_{3}(p^{41})^2
- \varphi_{4}(p^{54}+p^{32})^2 \\
+ 
\varphi_{4}(p^{52}+p^{43})^2 + 2\varphi_{5}(p^{54}+p^{32})(p^{52}+p^{43})
  \\[3mm]
 \text{Singular variety is a double plane and a double quadric,}
    \\[3mm]
u^1=0, ~~~ u^2u^3-u^4=0
\end{array}
$
&
4
\\
\hline
$\displaystyle
\begin{array}{l}
  e^2\wedge e^5+e^3\wedge e^4
  \\
  e^1\wedge e^5+e^2\wedge e^4
  \\
  e^1\wedge e^4+e^2\wedge e^3
  \\
  e^1\wedge e^2
    \ \\
\text{(no.\ 41 in \cite{Gal})}

\end{array}
$
&
$\displaystyle
\begin{array}{c}
\\[0.01mm]
  g=\varphi_{1}(p^{21})^2 + 2\varphi_{2}(p^{41} + 
p^{32})p^{21}
- 2\varphi_{3}(p^{52}+ p^{43})p^{21} +\varphi_3(p^{41}+p^{32})^2
\\
+2\varphi_{4}(p^{51}+p^{42})p^{21}
+\varphi_5(p^{51}+p^{42})^2
+ 2\varphi_{6}(p^{52}+p^{43})(p^{41}+p^{32})
  \\[3mm]
 \text{Singular variety is a double cubic,}
    \\[3mm]
u^1u^2u^3+(u^1)^2u^4-(u^2)^3=0
\\[3mm]
\end{array}
$
&
5
\\
\hline
$\displaystyle
\begin{array}{l}
  e^3\wedge e^5
  \\
  e^1\wedge e^5+e^2\wedge e^3\\
    \hphantom{e^1\wedge e^5}+e^2\wedge e^4
  \\
  e^1\wedge e^4
  \\
  e^1\wedge e^2
    \ \\
\text{(no.\ 44 in \cite{Gal})}

\end{array}
$
&
$\displaystyle
\begin{array}{c}
\\[0.01mm]
  g=\varphi_{1}(p^{21})^2 + 2\varphi_{2}p^{41}p^{21} + \varphi_{3}(p^{41})^2
  + 2\varphi_{4}(p^{51} + p^{42} +
 p^{32})p^{21}
 + \varphi_{5}(p^{53})^2
  \\[3mm]
 \text{Singular variety is a double plane and a double quadric,}
    \\[3mm]
u^1=0, ~~~ u^1+u^2u^3=0
\end{array}
$
&
5
\\
\hline

$\displaystyle
\begin{array}{l}
  e^1\wedge e^5
  \\
  e^1\wedge e^4+e^2\wedge e^3
  \\
  e^1\wedge e^2+e^2\wedge e^4
  \\
  e^3\wedge e^4
    \ \\
\text{(no.\ 45 in \cite{Gal})}

\end{array}
$
&
$\displaystyle
\begin{array}{c}
\\[0.01mm]
  g=\varphi_{1}(p^{43})^2 + 2\varphi_{2}(p^{42}+p^{21})p^{43}
- \varphi_{2}(p^{41}+p^{32})^2 
 + \varphi_{3}(p^{42}+p^{21})^2\\
 + 2\varphi_{4}(p^{41} + p^{32})p^{43} + 2\varphi_{5}(p^{42}+p^{21})(p^{41}+p^{32})
+ \varphi_{6}(p^{51})^2
  \\[3mm]
\text{Singular variety is a double plane and a double quadric,}
    \\[3mm]
u^1=0, ~~~ u^1u^4+u^2u^3-(u^4)^2=0
\end{array}
$
&
5
\\
\hline

$\displaystyle
\begin{array}{l}
  e^2\wedge e^4+e^3\wedge e^5
  \\
e^2\wedge e^3
  \\
  e^1\wedge e^5
  \\
  e^1\wedge e^4
    \ \\
\text{(no.\ 48 in \cite{Gal})}

\end{array}
$
&
$\displaystyle
\begin{array}{c}
\\[0.01mm]
  g=\varphi_{1}(p^{41})^2 + 2\varphi_{2}p^{51}p^{41} + \varphi_{3}(p^{51})^2
+ \varphi_{4}(p^{32})^2 + 2\varphi_{5}(p^{53} + p^{42})p^{32}
  \\[3mm]
\text{Singular variety is a double plane and a double quadric,}
    \\[3mm]
u^1=0, ~~~ u^2u^4+u^3=0
\end{array}
$
&
5
\\
\hline

$\displaystyle
\begin{array}{l}
  e^2\wedge e^5+e^3\wedge e^4
  \\
e^1\wedge e^5+e^2\wedge e^3
  \\
  e^1\wedge e^4
  \\
  e^1\wedge e^3
    \ \\
\text{(no.\ 50 in \cite{Gal})}

\end{array}
$
&
$\displaystyle
\begin{array}{c}
\\[0.01mm]
  g=\varphi_{1}(p^{31})^2 + 2\varphi_{2}p^{41}p^{31} + \varphi_{3}(p^{41})^2
+ 2\varphi_{4}(p^{51}+p^{32})p^{31} \\
+ 2\varphi_{5}(p^{52}+p^{43})p^{31}
 + \varphi_{5}(p^{51}+p^{32})^2
  \\[3mm]
\text{Singular variety is a double plane and a double quadric,}
    \\[3mm]
u^1=0, ~~~ u^1-u^2u^3=0
\end{array}
$
&
6
\\
\hline

$\displaystyle
\begin{array}{l}
  e^2\wedge e^5+e^3\wedge e^4
  \\
e^1\wedge e^5+e^2\wedge e^4
  \\
  e^1\wedge e^2
  \\
  e^2\wedge e^3
    \ \\
\text{(no.\ 52 in \cite{Gal})}

\end{array}
$
&
$\displaystyle
\begin{array}{c}
\\[0.01mm]
  g=\varphi_{1}(p^{32})^2 + 2\varphi_{2}p^{32}p^{21} + \varphi_{3}(p^{21})^2
  + 2\varphi_{4}(p^{51} + p^{42})p^{21} +
 2\varphi_{5}(p^{52}+p^{43})p^{32}
  \\[3mm]
\text{Singular variety  is a double plane and a double quadric,}
    \\[3mm]
u^2=0, ~~~ u^1u^3-(u^2)^2=0
\end{array}
$
&
6
\\
\hline

$\displaystyle
\begin{array}{l}
  e^1\wedge e^5+e^3\wedge e^4
  \\
e^2\wedge e^4
  \\
  e^1\wedge e^4+e^2\wedge e^3
  \\
  e^1\wedge e^3
    \ \\
\text{(no.\ 53 in \cite{Gal})}

\end{array}
$
&
$\displaystyle
\begin{array}{c}
\\[0.01mm]
  g=\varphi_{1}(p^{31})^2 + 2\varphi_{2}(p^{41} + 
p^{32})p^{31}
+ 2\varphi_{3}p^{42}p^{31} + \varphi_{3}(p^{41}+p^{32})^2
 \\
+ 2\varphi_{4}
(p^{41} +p^{32})p^{42}
+ \varphi_{5}(p^{42})^2 + 2\varphi_{6}(p^{51}+ p^{43})p^{31}
  \\[3mm]
\text{Singular\ variety is
a double\ plane and\ a\ double\ quadric,}
    \\[3mm]
u^1=0, ~~~ u^1u^4-u^2u^3=0
\end{array}
$
&
6
\\
\hline

$\displaystyle
\begin{array}{l}
  e^1\wedge e^3+e^2\wedge e^5
  \\
e^2\wedge e^4
  \\
  e^1\wedge e^5
  \\
  e^1\wedge e^4+e^2\wedge e^3
    \ \\
\text{(no.\ 54 in \cite{Gal})}

\end{array}
$
&
$\displaystyle
\begin{array}{c}
\\[0.01mm]
  g=2\varphi_{1}(p^{52}+p^{31})p^{42}+
  \varphi_{1}(p^{41}+p^{32})^2 +
 + 2\varphi_{2}(p^{41}+p^{32})p^{51} + \varphi_{2}(p^{52}+p^{31})^2
 \\
+  \varphi_{3}(p^{51})^2 + 2\varphi_{4}(p^{41}+p^{32})p^{42} + 2\varphi_{5}p^{51}p^{42} + 2\varphi_{5}(p^{52}+p^{31})(p^{41}+p^{32})
\\
+ \varphi_{6}(p^{42})^2 + 2\varphi_{7}(p^{52} + p^{31})p^{51}
  \\[3mm]
\text{Singular variety  is a double cubic,}
    \\[3mm]
u^1u^2u^3-(u^1)^2u^4-(u^2)^2=0
\end{array}
$
&
6
\\
\hline

$\displaystyle
\begin{array}{l}
  e^3\wedge e^5
  \\
e^1\wedge e^3+e^1\wedge e^4
  \\
  e^2\wedge e^4
  \\
  e^1\wedge e^2
    \ \\
\text{(no.\ 55 in \cite{Gal})}

\end{array}
$
&
$\displaystyle
\begin{array}{c}
\\[0.01mm]
  g=\varphi_{1}(p^{21})^2 + 2\varphi_{2}p^{42}p^{21} + \varphi_{3}(p^{42})^2
  + 2\varphi_{4}(p^{41} + p^{31})p^{21} +\\
 \varphi_{5}(p^{41}+p^{31})^2  + \varphi_{6}(p^{53})^2
  \\[3mm]
\text{Singular variety consists of 3 double planes,}
    \\[3mm]
u^1=0, ~~~ u^2=0, ~~~ u^3=0
\end{array}
$
&
6
\\
\hline

$\displaystyle
\begin{array}{l}
  e^2\wedge e^5+e^3\wedge e^4
  \\
e^1\wedge e^5+e^2\wedge e^4
  \\
  e^1\wedge e^3
  \\
  e^1\wedge e^2
    \ \\
\text{(no.\ 56 in \cite{Gal})}

\end{array}
$
&
$\displaystyle
\begin{array}{c}
\\[0.01mm]
  g=\varphi_{1}(p^{21})^2 + 2\varphi_{2}p^{31}p^{21} +
  \varphi_{3}(p^{31})^2
  + 2\varphi_{4}(p^{51}+ p^{42})p^{21}\\
   +
  2\varphi_{5}(p^{52}+p^{43})p^{21}
  + 2\varphi_{5}(p^{51}+p^{42})p^{31} 
  \\[3mm]
\text{Singular\ variety is
a  double plane and a double quadric,}
    \\[3mm]
u^1=0, ~~~ u^1u^3-(u^2)^2=0
\end{array}
$
&
7
\\
\hline

$\displaystyle
\begin{array}{l}
  e^1\wedge e^5
  \\
e^2\wedge e^4
  \\
  e^1\wedge e^4+e^2\wedge e^3
  \\
  e^1\wedge e^3
    \ \\
\text{(no.\ 58 in \cite{Gal})}

\end{array}
$
&
$\displaystyle
\begin{array}{c}
\\[0.01mm]
  g=\varphi_{1}(p^{31})^2 + 2\varphi_{2}(p^{41}+p^{32})p^{31}
+ 2\varphi_{3}p^{42}p^{31} + \varphi_{3}(p^{41}+p^{32})^2
\\
+ 2\varphi_{4}(p^{41}+p^{32})p^{42}
+ \varphi_{5}(p^{42})^2 + 2\varphi_{6}p^{51}p^{31} + \varphi_{7}(p^{51})^2
  \\[3mm]
\text{Singular variety  is
a double plane and a double quadric,}
    \\[3mm]
u^1=0, ~~~ u^1u^4-u^2u^3=0
\end{array}
$
&
7
\\
\hline

$\displaystyle
\begin{array}{l}
  e^2\wedge e^5+e^3\wedge e^4
  \\
e^1\wedge e^4
  \\
  e^1\wedge e^3
  \\
  e^2\wedge e^3
    \ \\
\text{(no.\ 59 in \cite{Gal})}

\end{array}
$
&
$\displaystyle
\begin{array}{c}
\\[0.01mm]
  g=\varphi_{1}(p^{32})^2 + 2\varphi_{2}p^{32}p^{31} + \varphi_{3}(p^{31})^2
+ 2\varphi_{4}p^{41}p^{31}\\
 + \varphi_{5}(p^{41})^2 + 2
\varphi_{6}(p^{52}+p^{43})p^{32}
  \\[3mm]
\text{Singular variety consists of 3 double planes,}
    \\[3mm]
u^1=0, ~~~ u^2=0, ~~~ u^3=0
\end{array}
$
&
7
\\
\hline

$\displaystyle
\begin{array}{l}
  e^2\wedge e^5+e^3\wedge e^4
  \\
e^1\wedge e^3+e^2\wedge e^3
  \\
  e^1\wedge e^4
  \\
  e^1\wedge e^2
    \ \\
\text{(no.\ 60 in \cite{Gal})}

\end{array}
$
&
$\displaystyle
\begin{array}{c}
\\[0.01mm]
  g=\varphi_{1}(p^{21})^2 + 2\varphi_{2}p^{41}p^{21} + \varphi_{3}
(p^{41})^2
+ 2\varphi_{4}(p^{32}+ p^{31})p^{21} \\
-2\varphi_{5}(p^{52}+p^{43})p^{21} + 2
\varphi_{5}(p^{32} + p^{31})p^{41}
+ \varphi_{6}(p^{32}+p^{31})^2
  \\[3mm]
\text{Singular\ variety\  consists\ of\ 3\  double\ planes,}
    \\[3mm]
u^1=0, ~~~ u^2=0, ~~~ u^1+u^2=0
\end{array}
$
&
8
\\
\hline

$\displaystyle
\begin{array}{l}
  e^4\wedge e^5
  \\
e^1\wedge e^4+e^2\wedge e^3
  \\
  e^1\wedge e^2
  \\
  e^1\wedge e^3
    \ \\
\text{(no.\ 63 in \cite{Gal})}

\end{array}
$
&
$\displaystyle
\begin{array}{c}
\\[0.01mm]
  g=\varphi_{1}(p^{31})^2 + 2\varphi_{2}p^{31}p^{21} + \varphi_{3}(p^{21})^2
+ 2\varphi_{4}(p^{41} + p^{32})p^{31}\\
 + 2\varphi_{5}(p^{41}+ p^{32})p^{21} + \varphi_{6}(p^{54})^2
  \\[3mm]
\text{Singular variety consists of 2 planes,}
    \\[3mm]
u^1=0~({\rm quadruple}), ~~~ u^4=0~ ({\rm double})
\end{array}
$
&
8
\\
\hline

$\displaystyle
\begin{array}{l}
  e^2\wedge e^4
  \\
e^1\wedge e^5
  \\
  e^1\wedge e^3
  \\
  e^2\wedge e^3
    \ \\
\text{(no.\ 64 in \cite{Gal})}

\end{array}
$
&
$\displaystyle
\begin{array}{c}
\\[0.01mm]
  g=\varphi_{1}(p^{32})^2 + 2\varphi_{2}p^{32}p^{31} + \varphi_{3}
(p^{31})^2
+ 2\varphi_{4}p^{51}p^{31} + \varphi_{5}(p^{51})^2 + 2
\varphi_{6}p^{42}p^{32}
\\
+ \varphi_{7}(p^{42})^2
  \\[3mm]
\text{Singular\ variety\  consists\ of\ 3\  double\ planes,}
    \\[3mm]
u^1=0, ~~~ u^2=0, ~~~ u^3=0
\end{array}
$
&
8
\\
\hline

$\displaystyle
\begin{array}{l}
  e^2\wedge e^5+e^3\wedge e^4
  \\
e^1\wedge e^4+e^2\wedge e^3
  \\
  e^1\wedge e^3
  \\
  e^1\wedge e^2
    \ \\
\text{(no.\ 65 in \cite{Gal})}

\end{array}
$
&
$\displaystyle
\begin{array}{c}
\\[0.01mm]
  g=\varphi_{1}(p^{21})^2 + 2\varphi_{2}p^{31}p^{21} + \varphi_{3}(p^{31})^2
+ 2\varphi_{4}(p^{41} + p^{32})p^{21} \\
+ 2\varphi_{5}(p^{41}
+ p^{32})p^{31} - 2
\varphi_{6}(p^{52} +p^{43})p^{21}
+ \varphi_{6}(p^{41}+p^{32})^2
  \\[3mm]
\text{Singular variety consists of 2 planes,}
    \\[3mm]
u^1=0~({\rm quadruple}), ~~~ u^2=0~ ({\rm double})
\end{array}
$
&
9
\\
\hline

$\displaystyle
\begin{array}{l}
  e^1\wedge e^3+e^2\wedge e^3
  \\
e^2\wedge e^4
  \\
  e^1\wedge e^5
  \\
  e^1\wedge e^2
    \ \\
\text{(no.\ 66 in \cite{Gal})}

\end{array}
$
&
$\displaystyle
\begin{array}{c}
\\[0.01mm]
  g=\varphi_{1}(p^{21})^2 + 2\varphi_{2}p^{51}p^{21} + \varphi_{3}(p^{51})^2
+ 2\varphi_{4}p^{42}p^{21} + \varphi_{5}(p^{42})^2 \\
+ 2
\varphi_{6}(p^{32}+ p^{31})p^{21} + 
\varphi_{7}(p^{32}+p^{31})^2
  \\[3mm]
\text{Singular variety consists of 3 double planes,}
    \\[3mm]
u^1=0, ~~~ u^2=0, ~~~ u^1+u^2=0
\end{array}
$
&
9
\\
\hline

$\displaystyle
\begin{array}{l}
  e^1\wedge e^5+e^2\wedge e^4
  \\
e^2\wedge e^3
  \\
  e^1\wedge e^4
  \\
  e^1\wedge e^3
    \ \\
\text{(no.\ 67 in \cite{Gal})}

\end{array}
$
&
$\displaystyle
\begin{array}{c}
\\[0.01mm]
  g=\varphi_{1}(p^{31})^2 + 2\varphi_{2}p^{41}p^{31} + \varphi_{3}(p^{41})^2
+ 2\varphi_{4}p^{32}p^{31} + 2\varphi_{5}(p^{51} + p^{42})p^{31}
\\
+ 2\varphi_{5}p^{41}p^{32} + 
\varphi_{6}(p^{32})^2 + 2\varphi_{7}(p^{51}
+ p^{42})p^{41}
  \\[3mm]
\text{Singular variety consists of 2 planes,}
    \\[3mm]
u^1=0~({\rm quadruple}), ~~~ u^3=0~ ({\rm double})
\end{array}
$
&
9
\\
\hline

$\displaystyle
\begin{array}{l}
  e^1\wedge e^5+e^2\wedge e^4
  \\
e^2\wedge e^3
  \\
  e^1\wedge e^4
  \\
  e^1\wedge e^2
    \ \\
\text{(no.\ 71 in \cite{Gal})}

\end{array}
$
&
$\displaystyle
\begin{array}{c}
\\[0.01mm]
  g=\varphi_{1}(p^{21})^2 + 2\varphi_{2}p^{41}p^{21} + \varphi_{3}
(p^{41})^2
+ 2\varphi_{4}p^{32}p^{21} + \varphi_{5}(p^{32})^2 \\
+ 2\varphi_{6}(p^{51}
+ p^{42})p^{21} + 2
\varphi_{7}(p^{51} + p^{42})p^{41}
  \\[3mm]
\text{Singular variety consists of 2 planes,}
    \\[3mm]
u^1=0~({\rm quadruple}), ~~~ u^2=0~ ({\rm double})
\end{array}
$
&
10
\\
\hline

$\displaystyle
\begin{array}{l}
  e^1\wedge e^5+e^2\wedge e^4
  \\
e^1\wedge e^4+e^2\wedge e^3
  \\
  e^1\wedge e^3
  \\
  e^1\wedge e^2
    \ \\
\text{(no.\ 73 in \cite{Gal})}

\end{array}
$
&
$\displaystyle
\begin{array}{c}
\\[0.01mm]
  g=\varphi_{1}(p^{21})^2 + 2\varphi_{2}p^{31}p^{21} + \varphi_{3}(p^{31})^2
+ 2\varphi_{4}(p^{41}+ p^{32})p^{21} + 2\varphi_{5}(p^{41}+p^{32})p^{31} \\
+ 2\varphi_{6}(p^{51} + p^{42})p^{31}
+ \varphi_{6}(p^{41}+p^{32})^2
+ 2\varphi_{7}(p^{51} + p^{42})p^{21}
  \\[3mm]
\text{Singular variety is a single six-tuple plane,}
    \\[3mm]
u^1=0
\end{array}
$
&
11
\\
\hline

$\displaystyle
\begin{array}{l}
  e^2\wedge e^3
  \\
e^1\wedge e^4
  \\
  e^1\wedge e^5
  \\
  e^1\wedge e^2
    \ \\
\text{(no.\ 79 in \cite{Gal})}

\end{array}
$
&
$\displaystyle
\begin{array}{c}
\\[0.01mm]
  g=\varphi_{1}(p^{21})^2 + 2\varphi_{2}p^{51}p^{21} + \varphi_{3}
(p^{51})^2
+ 2\varphi_{4}p^{41}p^{21} + 2\varphi_{5}p^{51}p^{41} + \varphi_{6}(p^{41})^2
\\
+ 2\varphi_{7}p^{32}p^{21} + \varphi_{8}(p^{32})^2
  \\[3mm]
\text{Singular variety consists of 2 planes,}
    \\[3mm]
u^1=0~({\rm quadruple}), ~~~ u^2=0~ ({\rm double})
\end{array}
$
&
12
\\
\hline

$\displaystyle
\begin{array}{l}
  e^1\wedge e^5+e^2\wedge e^3
  \\
e^1\wedge e^4
  \\
  e^1\wedge e^2
  \\
  e^1\wedge e^3
    \ \\
\text{(no.\ 86 in \cite{Gal})}

\end{array}
$
&
$\displaystyle
\begin{array}{c}
\\[0.01mm]
  g=\varphi_{1}(p^{31})^2 + 2\varphi_{2}p^{31}p^{21} + \varphi_{3}(p^{21})^2
+ 2\varphi_{4}p^{41}p^{31} + 2\varphi_{5}p^{41}p^{21} + \varphi_{6}(p^{41})^2
\\
+ 2\varphi_{7}(p^{51} + p^{32})p^{31} + 2\varphi_{8}(p^{51}
+ p^{32})p^{21}
  \\[3mm]
\text{Singular variety is a single six-tuple plane,}
    \\[3mm]
u^1=0
\end{array}
$
&
14
\\
\hline

$\displaystyle
\begin{array}{l}
  e^1\wedge e^2
  \\
e^1\wedge e^3
  \\
  e^1\wedge e^4
  \\
  e^1\wedge e^5
    \ \\
\text{(no.\ 97 in \cite{Gal})}

\end{array}
$
&
$\displaystyle
\begin{array}{c}
\\[0.01mm]
  g=\varphi_{1}(p^{51})^2 + 2\varphi_{2}p^{51}p^{41} + \varphi_{3}(p^{41})^2
+ 2\varphi_{4}p^{51}p^{31} + 2\varphi_{5}p^{41}p^{31} + \varphi_{6}(p^{31})^2
\\
+ 2\varphi_{7}p^{51}p^{21} + 2
\varphi_{8}p^{41}p^{21} + 2\varphi_{9}p^{31}p^{21}
+ \varphi_{10}(p^{21})^2
  \\[3mm]
\text{Singular variety is a single six-tuple plane,}
    \\[3mm]
u^1=0
\end{array}
$
&
20
\\
\hline
\end{longtable}
\end{center}

\noindent {\bf Remark 1.} Most of the cases in Table 1 can be uniquely
distinguished by dimensions of stabilisers and the geometry of singular
varieties.

\medskip
\noindent {\bf Remark 2.} Singular varieties appearing in Table 1 are nothing
but normal forms of determinantal cubics defined as
$$
\rk(A^1u, A^2u, A^3u, A^4u)<4,
$$
where $A^i$ are $5\times 5$ skew-symmetric matrices, and $u$ is a 5-component
column vector (note that all $4\times 4$ minors of this $5\times 4$ matrix have
one and the same cubic factor).

\medskip
\noindent {\bf Remark 3.} Monge metrics from Table 1 depend on auxiliary
parameters which, in some cases, can be removed by using the stabiliser (the
action of the stabiliser may have several non-equivalent orbits). Thus, one can
show that every Monge metric with 14-dimensional stabiliser is projectively
equivalent to one of the three canonical forms,
$$
\begin{array}{c}
(du^3+u^2du^1-u^1du^2)du^2+du^1du^4,\\
(du^3+u^2du^1-u^1du^2)du^2+(du^1)^2+(du^4)^2,\\
(du^3+u^2du^1-u^1du^2)du^2+(du^1)^2+du^2du^4;
\end{array}
$$
here the first canonical form corresponds to Example 6 of Sect
\ref{sec:examples}.

\subsection{Remarks on the multi-component case}
\label{sec:five-component-case}

For $n=5$ formula (\ref{A}) involves a 5-dimensional subspace $A=\langle A^1,
\dots, A^5 \rangle$ in the space of $6\times 6$ skew-symmetric forms,
equivalently, a point in $\Gr_5(\Lambda^2V^6)$.  Condition (\ref{King}) imposes
15 constraints for the 15 entries of $\phi$ (note that $\dim S^2A=\dim
\Lambda^4V^6=15$). For a {\it generic} subspace, these conditions are linearly
independent, and imply $\phi=0$. Thus, to get a nontrivial Monge metric one has
to select a subspace $A$ such that the forms $A^{\beta} \wedge A^{\gamma}$ are
linearly dependent. This defines a hypersurface $M$ in $\Gr_5(\Lambda^2V^6)$,
which is of degree 6 in the Pl\"ucker coordinates. Given a smooth generic point
of $M$, there exists a unique non-degenerate $\phi$ satisfying these
conditions. Thus, 5-component Hamiltonian operators are parametrised by points
of an algebraic hypersurface (of degree 6) in
$\Gr_5(\Lambda^2V^6)$. Unfortunately, it is highly unlikely that one can obtain
an effective classification of orbits of the associated ${ SL}(6)$-action, as
well as to classify Monge metrics corresponding to singular points of $M$.

\medskip
\noindent{\bf Example.} Consider the subspace spanned by the following
bivectors,
\begin{align*}
  & 2e^1\wedge e^2 + e^3\wedge e^4 + e^5\wedge e^6,
  \\
  &e^1\wedge e^3 + e^1\wedge e^4 + e^4\wedge e^6 + \alpha e^2\wedge e^4,
  \\
  &e^2\wedge e^6 - e^3\wedge e^5,
  \\
  &e^1\wedge e^6 - e^2\wedge e^3 + e^4\wedge e^5,
  \\
  &e^1\wedge e^5 + 2e^2\wedge e^5 +e^3\wedge e^4.
\end{align*}
One can show that the associated system (\ref{King}) has rank $15$ for
$\alpha\neq 0$, and therefore implies $\phi=0$; for $\alpha=0$ it has rank
$14$, and the corresponding (unique) nonzero solution $\phi$ is non-degenerate.

\medskip

For general $n$ the situation is similar: given a point in
$\Gr_n(\Lambda^2V^{n+1})$, condition (\ref{King}) imposes $C_{n+1}^4$
constraints for $C_n^2$ components of $\phi$. Thus, to get a nontrivial
solution one has to require that the number of independent constraints is less
than $C_n^2$. This defines an algebraic subvariety in $\Gr_n(\Lambda^2V^{n+1})$
whose geometry/singularities are yet to be investigated.

\section{Concluding remarks}

We have obtained a classification of third-order Hamiltonian operators of
differential-geometric type with the number of components $n\leq 4$. For $n=1,
2$ any such operator can be transformed to constant coefficients, for $n=3$ we
have 6 non-equivalent canonical forms, for $n=4$ there are 32 multi-parameter
families.

\begin{itemize}

\item Our approach is based on the classification of $SL(n+1)$-orbits in
  $\Gr_n(\Lambda^2V)$, which is only available for $n\le 4$. Apparently, for
  $n=5$ the problem becomes `wild', and no reasonable classification is
  possible.

\item Examples suggest that every homogeneous third-order Hamiltonian operator
  arises as a Hamiltonian structure of some {\it local} conservative system of
  hydrodynamic type of the form $u^i_t=(V^i(u))_x$, with non-local Hamiltonian
  (here $u^i$ are the flat coordinates). In the `generic' case, any system of
  this kind is linearly degenerate and non-diagonalisable.  Furthermore, in the
  generic case the fluxes $V^i$ are rational functions of the form
  $V^i=\frac{S^i}{S}$ where $S$ is the polynomial defining the singular variety
  of the corresponding Monge metric, and $S^i$ are polynomials of degree one
  higher.  It would be interesting to clarify the geometric meaning of such
  systems. It would also be interesting to classify higher-order conservative
  systems possessing homogeneous third-order Hamiltonian structures (the
  compatibility conditions between right-hand sides of such systems with the
  corresponding Hamiltonian operators are currently under investigation).

\end{itemize}

\section*{Acknowledgements}

We thank A. Bolsinov, R. Chiriv\`i, B. Doubrov, A. King, A.C. Norman,
A. Prendergast-Smith and D. Timashev for clarifying discussions. We acknowledge
financial support from GNFM of the Istituto Nazionale di Alta Matematica, the
Istituto Nazionale di Fisica Nucleare, and the Dipartimento di Matematica e
Fisica ``E. De Giorgi'' of the Universit\`a del Salento.  MVP's work was also
partially supported by the grant of Presidium of RAS \textquotedblleft
Fundamental Problems of Nonlinear Dynamics\textquotedblright\ and by the RFBR
grant 11-01-00197.


\begin{thebibliography}{99}

\bibitem{Aga} \emph{S.I. Agafonov}, Linearly degenerate reducible systems of
  hydrodynamic type,   J. Math. Anal. Appl. {\bf 222}, no. 1 (1998) 15-37.

\bibitem{AV86} \emph{A.M. Astashov and A.M.  Vinogradov}, On the structure of
  Hamiltonian operators in field theory, J. Geom. Phys. {\bf 3}, no. 2 (1986)
  263-287.

\bibitem{BP} \emph{A.V. Balandin, G.V. Potemin}, \newblock On non-degenerate
  differential-geometric Poisson brackets of third order, Russian Mathematical
  Surveys \textbf{56} No. 5 (2001) 976-977.

\bibitem{Dolgachev}\emph{ I.V. Dolgachev}, Classical algebraic geometry. A
  modern view, Cambridge University Press, Cambridge, 2012, 639 pp.

\bibitem{Doyle} \emph{P.W. Doyle}, \newblock Differential geometric Poisson
  bivectors in one space variable, J. Math. Phys. \textbf{34} No. 4 (1993)
  1314-1338.

\bibitem{DN} \emph{B.A. Dubrovin and S.P. Novikov}, \newblock Hamiltonian
  formalism of one-dimensional systems of hydrodynamic type and the
  Bogolyubov-Whitham averaging method, Soviet Math. Dokl. \textbf{27} No. 3
  (1983) 665--669.

\bibitem{DN2} \emph{B.A. Dubrovin and S.P. Novikov}, Poisson brackets of
  hydrodynamic type, Soviet Math. Dokl. {\bf 30} No. 3 (1984), 651-2654.

\bibitem{Dub} \emph{B.A.  Dubrovin}, Geometry of 2D topological field theories,
  Lecture Notes in Mathematics, V.1620, Berlin, Springer, 120-348.

\bibitem{FN} \emph{E.V. Ferapontov, C.A.P. Galvao, O.I. Mokhov, Y. Nutku}, %
  \newblock Bi-Hamiltonian structure of equations of associativity in 2-d
  topological field theory, Comm. Math. Phys. \textbf{186 }(1997) 649-669.

\bibitem{FM} \emph{E.V. Ferapontov, O.I. Mokhov}, \newblock Equations of
  associativity of two-dimensional topological field theory as integrable
  Hamiltonian nondiagonalisable systems of hydrodynamic type, Func. Anal.
  Appl. \textbf{30} No. 3 (1996) 195-203.

\bibitem{fmoss} \emph{E.V. Ferapontov, J. Moss}, \newblock Linearly degenerate
  PDEs and quadratic line complexes, Comm. Anal. Geom. {\bf 23}, no.1 (2015)
  91-127.

\bibitem{FPV} \emph{E.V. Ferapontov, M.V. Pavlov and R.F. Vitolo},
  Projective-geometric aspects of homogeneous third-order Hamiltonian
  operators, J. Geom. Phys.  {\bf 85} (2014) 16-28.

\bibitem{FPV3} \emph{E.V. Ferapontov, M.V. Pavlov and R.F. Vitolo},
  in preparation.

\bibitem{Gal} \emph{L. Yu. Galitski and D.A. Timashev}, On classification of
  metabelian Lie algebras, J. Lie Theory {\bf 9}, no. 1 (1999) 125-156.

\bibitem{Gauger} \emph{M.A. Gauger}, On the classification of metabelian Lie
  algebras, Trans. Amer. Math. Soc. {\bf 179} (1973) 293-329.

\bibitem{Jess} \emph{C.M. Jessop}, A treatise on the line complex, Cambridge
  University Press 1903.

\bibitem{KN1} \emph{J. Kalayci, Y. Nutku}, Bi-Hamiltonian structure of a WDVV
  equation in 2d topological field theory, Phys.\ Lett.\ A \textbf{227} (1997),
  177--182.

\bibitem{KN2} \emph{J. Kalayci, Y. Nutku}, Alternative bi-Hamiltonian
  structures for WDVV equations of associativity, J. Phys.\ A: Math.\ Gen.\
  \textbf{31} (1998) 723-734.

\bibitem{OM85} \emph{O.I. Mokhov}, Local third-order Poisson brackets, Russian
  Math. Surveys \emph{40}, no. 5 (1985) 233-234.

\bibitem{OM87} \emph{O.I. Mokhov}, Hamiltonian differential operators
  and contact geometry, Functional Anal. Appl.  \textbf{21}, no. 3 (1987)
  217-223.

\bibitem{OM95} \emph{O.I. Mokhov}, Differential equations of associativity in
  2D topological field theories and geometry of nondiagonalizable systems of
  hydrodynamic type. In: Internat. Conference on Integrable Systems
  ``Nonlinearity and Integrability: From Mathematics to Physics'', Feb.  21--24,
  1995, Montpellier, France.

\bibitem{OM98} \emph{O.I. Mokhov}, Symplectic and Poisson geometry on loop
  spaces of manifolds and nonlinear equations.\ Am.\ Math.\ Soc.\ Translations,
  ser.\ 2.\  Topics in Topology and Math.\ Physics, Ed.\ Novikov, S.\ P.\ 170,
  Providence, RI: AMS, 1995, pp.\ 121--151; \texttt{hep-th/9503076}.

\bibitem{Novikov} \emph{S.P. Novikov}, Geometry of conservative systems of
  hydrodynamic type. The averaging method for field-theoretic systems, Uspekhi
  Mat. Nauk {\bf 40} No. 4 (1985) 79-89.

\bibitem{PV} \emph{M.V. Pavlov and R.F. Vitolo}, On the bi-Hamiltonian Geometry
  of WDVV Equations, arXiv:1409.7647.

\bibitem{GP97} \emph{G.V. Potemin}, \newblock On third-order Poisson brackets
  of differential geometry, Russ. Math. Surv. \textbf{52} (1997) 617-618.

\bibitem{GP91} \emph{G.V. Potemin}, \newblock Some aspects of differential
  geometry and algebraic geometry in the theory of solitons. PhD Thesis,
  Moscow, Moscow State University (1991) 99 pages.

\bibitem{reduce} REDUCE 3.8, a computer algebra system. Pre-compiled binary
  programs for Windows, Mac, Linux, documentation and source code freely
  available at Sourceforge: \url{http://reduce-algebra.sourceforge.net/}

\bibitem{cdiff} \emph{R.F. Vitolo}, CDE: a Reduce package for computations in
  the geometry of differential equations, software, user guide and examples
  freely available at \texttt{http://gdeq.org}.
\end{thebibliography}
\end{document}